\title{Bogoliubov Transformations Beyond Shale--Stinespring: Generic $ v^* v $ for bosons}
\author{
Sascha Lill\footnote{Universit\`a degli Studi di Milano, Dipartimento di Matematica, Via Cesare Saldini 50, 20133 Milano, Italy}\ \footnote{E-Mail: \url{sascha.lill@unimi.it}}~
} 
\date{\today}
\newcommand{\be}{\boldsymbol{e}}
\newcommand{\bof}{\boldsymbol{f}}
\newcommand{\bg}{\boldsymbol{g}}
\newcommand{\bphi}{\boldsymbol{\phi}}
\newcommand{\bpsi}{\boldsymbol{\psi}}
\newcommand{\cA}{\mathcal{A}}
\newcommand{\cB}{\mathcal{B}}
\newcommand{\cD}{\mathcal{D}}
\newcommand{\cE}{\mathcal{E}}
\newcommand{\cI}{\mathcal{I}}
\newcommand{\cO}{\mathcal{O}}
\newcommand{\cQ}{\mathcal{Q}}
\newcommand{\cV}{\mathcal{V}}
\newcommand{\fc}{\mathfrak{c}}
\newcommand{\fr}{\mathfrak{r}}
\newcommand{\fR}{\mathfrak{R}}
\newcommand{\sA}{\mathscr{A}}
\newcommand{\sF}{\mathscr{F}}
\newcommand{\sH}{\mathscr{H}}
\newcommand{\CCC}{\mathbb{C}}
\newcommand{\NNN}{\mathbb{N}}
\newcommand{\UUU}{\mathbb{U}}
\newcommand{\dom}{{\rm dom}}
\newcommand{\eRen}{\mathrm{eRen}}
\newcommand{\ex}{{\rm ex}}
\newcommand{\Ren}{\mathrm{Ren}}
\newcommand{\tr}{\mathrm{tr}}
\newcommand{\FB}{\overline{\mathscr{F}}}
\newcommand{\EB}{\overline{\mathcal{E}_\mathscr{F}}}
\newcommand{\EBn}{\overline{\mathcal{E}_{\mathscr{F}, 0}}}
\newcommand{\cAB}{\overline{\mathcal{A}}}
\newtheorem{lemma}{Lemma}
\numberwithin{lemma}{section}
\numberwithin{corollary}{section}
\newtheorem{theorem}{Theorem}
\numberwithin{theorem}{section}
\numberwithin{proposition}{section}
\numberwithin{remark}{section}
\theoremstyle{definition}\newtheorem{definition}{Definition}
\numberwithin{definition}{section}
\theoremstyle{definition}
\numberwithin{assumption}{section}
\newcounter{remarks}
\begin{document}
\maketitle
\begin{abstract}
We construct an extension of Fock space and prove that it allows for implementing bosonic Bogoliubov transformations in a certain extended sense. While an implementation in the regular sense on Fock space is only possible if a certain operator $ v^* v $ is trace class (this is the well--known Shale--Stinespring condition), the extended implementation works without any restrictions on this operator. This generalizes a recent result of extended implementability, which required $ v^* v $ to have discrete spectrum.\\
\end{abstract}

\tableofcontents

\section{Introduction}	
\label{sec:intro}

Bogoliubov transformations are a versatile tool for elucidating the physical properties of many--body and quantum field theory (QFT) models in mathematical physics. Applications include the simplification of Hamiltonians in interacting bosonic or fermionic $ N $--body systems \cite{Haag1962, BachLiebSolovej1994, BenedikterNamPortaSchleinSeiringer2020, FalconiGiacomelliHainzlPorta2021, AdhikariBrenneckeSchlein2021}, QFT toy models \cite{Berezin1992, Marecki2003, Derezinski2017} and relativistic quantum dynamics \cite{TorreVaradarajan1999, HollandsWald2001}.\\
In simple words, a Bogoliubov transformation $ \cV $ is a linear replacement of creation-- and annihilation operators $ a^\dagger \mapsto b^\dagger, a \mapsto b $ such that $ b, b^\dagger $ satisfy the same commutation relations as $ a, a^\dagger $. There are cases where $ \cV $ allows for finding a unitary operator on Fock space $ \UUU_\cV: \sF \to \sF $ such that $ \UUU_\cV a^\dagger \UUU_\cV^{-1} = b^\dagger $ and $ \UUU_\cV a \UUU_\cV^{-1} = b $. In this case, $ \UUU_\cV $ is called an \textbf{implementer} of $ \cV $ and $ \cV $ is called \textbf{implementable}. The case of implementability is interesting for the following reason: Suppose, we are given a Hamiltonian $ H $, which is an inconvenient sum of products of $ a, a^\dagger $, but there exists a Bogoliubov transformation $ \cV $ and a constant $ c \in \CCC $, such that $ (H + c) $ takes a much more convenient form in the $ b, b^\dagger $--operators. This situation arises frequently for quadratic Hamiltonians, see Remark \ref{rem:diagonalization}. Then, $ \cV^{-1} $, which is implemented by $ \UUU_\cV^{-1} = \UUU_\cV^* $, replaces $ b, b^\dagger $ by $ a, a^\dagger $. So, under the replacement, $ (H + c) $ becomes
\begin{equation}
	\widetilde{H} := \UUU_\cV^{-1} (H + c) \UUU_\cV,
\label{eq:Htrafo}
\end{equation}
which has a convenient form in the $ a, a^\dagger $--operators. This convenient form allows for a simple description of the dynamics generated by $ \widetilde{H} $, which is by \eqref{eq:Htrafo} unitarily equivalent to the dynamics generated by $ H $.\\

The important question when $ \cV $ is implementable has long been settled by Shale and Stinespring \cite{Shale1962, ShaleStinespring1965}: This is the case if and only if for some operator $ v $ \eqref{eq:bdaggerb}, characterizing $ \cV $, we have 
\begin{equation*}
	\tr(v^* v) < \infty.
\end{equation*}
This so--called Shale--Stinespring condition is rather restrictive. Therefore, the author recently \cite{BBS} proposed the construction of an \textbf{implementer in an extended sense}, $ \UUU_\cV $, which also achieves the replacement $ a^\sharp \mapsto b^\sharp $ induced by $ \cV $, but can be constructed for a much greater class of Bogoliubov transformations. In \cite{BBS}, two Fock space extensions $ \widehat{\sH}, \FB $ were constructed, each together with an operator $ \UUU_\cV: \cD_\sF \to \widehat{\sH} $ or $ \UUU_\cV: \cD_\sF \to \FB $, with $ \cD_\sF \subset \sF $ being a dense subspace, such that \eqref{eq:Htrafo} holds as a strong operator identity on $ \cD_\sF $. Here, $ \widehat{\sH} $ is an infinite tensor product space of the form introduced by von Neumann \cite{vonNeumann1939} and $ \FB $ is an ``extended state space'' as recently introduced in \cite{ESS}. The main result in \cite{BBS} was that $ \cV $ can be implemented in the extended sense, whenever
\begin{equation*}
	v^* v \text{ has \textbf{discrete spectrum}}.
\end{equation*}
In this general case, $ c $ becomes an ``infinite renormalization constant'', which was rigorously interpreted in \cite{BBS} as an element of some vector space $ \Ren_1 $.\\
The purpose of this article is to propose a new Fock space extension $ \EB $, related to $ \FB $, which allows for an extended implementer $ \UUU_\cV: \cD_\sF \to \EB $
\begin{equation*}
	\text{in the \textbf{bosonic} case for \textbf{arbitrary} } v^* v,
\end{equation*}
as long as $ v $ is defined on a suitable domain. We establish the notation in Section \ref{sec:notation}, construct $ \EB $ in Section \ref{sec:ESS}, state our main result, Theorem \ref{thm:bosonicESS}, in Section \ref{sec:mainresult} and prove it in Section \ref{sec:proofmainresult}.\\
An analogous result can be expected to hold for fermions, as long as 1 is not an eigenvalue of infinite multiplicity of $ v^* v $, i.e., $ \cV $ only performs a particle--hole transformation on finitely many modes. The main technical complication in the fermionic case comes from the fact that the operator $ \cO $ \eqref{eq:cOpolar} is unbounded, whereas it is bounded in the bosonic case.\\
An extended implementer $ \UUU_\cV $ is particularly useful in cases where $ H $ is an algebraic expression that does not define an operator on a dense subspace of $ \sF $, while $ \widetilde{H} = \UUU_\cV^{-1} (H + c) \UUU_\cV $ maps $ \cD_\sF $ into itself and allows for a self--adjoint extension. In that case, we can mathematically make sense of $ (H + c) $ as an operator mapping $ \UUU_\cV(\cD_\sF) $ into itself (see Figure \ref{fig:Dressing_Fockspace_EB}) and use the dynamics generated by $ \widetilde{H} $ on $ \sF $ for a physical interpretation.\\
The construction of $ \EB $ is significantly shorter than that of $ \FB $ in \cite{BBS}. We intend a future use of $ \EB $ as a bookkeeping tool for more general operator transformations that go beyond Bogoliubov transformations.\\

\begin{figure}
	\centering
	\scalebox{0.95}{\begin{tikzpicture}

\node[anchor = east] at (0,0)  {$ \sF \supset \cD_\sF $};
\node[anchor = east] at (0.5,1.5) {$ \EB \supset \UUU_\cV (\cD_\sF) $};
\draw[thick, ->] (-0.3,0.3) --node[anchor = west] {$\UUU_\cV$} ++(0,0.9);
\draw[thick, ->] (0.2,0) -- node[anchor = south] {$\widetilde{H}$}  (1.8,0);

\node[anchor = west] at (2,0) {$ \cD_\sF \subset \sF $};
\node[anchor = west] at (1.5,1.5) {$ \UUU_\cV(\cD_\sF) \subset \EB $};
\draw[thick,<-] (2.3,0.3) --node[anchor = west] {$\UUU_\cV^{-1}$} ++(0,0.9);
\draw[thick, ->] (0.4,1.5) -- node[anchor = south] {$ (H + c) $}  (1.6,1.5);

\end{tikzpicture}}
	\scalebox{0.95}{\begin{tikzpicture}
 \filldraw[thick,fill = blue!10!white] (4,-0.2) rectangle ++ (2,1.4);
 \filldraw[thick,fill = gray!30!white] (4.9,0.5) ellipse (0.8 and 0.4) node {$ \mathcal{D}_\mathscr{F} $};
 \draw[blue] (4.5,1.2) -- ++(0.5,0.3) node[anchor = south] {Fock space $ \mathscr{F} $};
 
 \filldraw[fill = blue!05!white] (6.7,-0.3) rectangle ++(4.5,2.3);
 \filldraw[thick,fill = blue!10!white] (7,-0.2) rectangle ++ (2,1.4);
 \filldraw[thick,dashed,fill = gray!30!white] (10.2,1.2) ellipse (0.8 and 0.4) node {$ \mathbb{U}_{\mathcal{V}}(\mathcal{D}_\mathscr{F}) $};
 \filldraw[thick,fill = gray!30!white] (7.9,0.5) ellipse (0.8 and 0.4) node {$ \mathcal{D}_\mathscr{F} $};
 \draw[line width = 2,->] (9.5,1.4) .. controls (9.2,1.5) and (8.8,1.4) .. (8.5,0.8);
 \node at (8.8,1.6) {$ \mathbb{U}_{\mathcal{V}}^{-1} $};
 \draw[line width = 2,<-] (9.5,1.2) .. controls (9.2,1.3) and (8.8,1.1) .. (8.6,0.6);
 \node at (9.3,0.8) {$ \mathbb{U}_{\mathcal{V}} $};
 \draw[line width = 2,->] (10.7,0.9) .. controls (11.8,0.5) and (11.8,1.6) .. (10.8,1.4);
 \node at (11.8,1.6) {$ (H + c) $};
 \draw[blue] (11.2,0) -- ++(0.4,0.3) node[anchor = west] {$ \overline{\mathcal{E}_\mathscr{F}} $};

\end{tikzpicture}}
	\caption{Left: $ \widetilde{H} $ is defined in \eqref{eq:Htrafo} such that the diagram commutes.\\ Right: Sketch of an extended implementation using the Fock space extension $ \EB $.}
	\label{fig:Dressing_Fockspace_EB}
\end{figure}

\section{Notation}	
\label{sec:notation}

The notation is adapted from \cite{BBS} in great parts. We consider a system with an indeterminate number of particles $ N $, whose one--particle sector is given by the sequence space $ \ell^2 $.\footnote{Note that any separable Hilbert space can be identified with $ \ell^2 $ by fixing an orthonormal basis, so the description with $ \ell^2 $ is quite general.} The configuration of the system is given by a vector of mode numbers $ (j_1, \ldots, j_N) $, which is an element of \textbf{configuration space}\footnote{The symbol $ \sqcup $ denotes the disjoint union of two sets. This notation is chosen to emphasize that, e.g., $ \NNN \sqcup \NNN^2 \sqcup \ldots \sqcup \NNN^N $ is not just $ \NNN^N $.}
\begin{equation}
	\cQ := \bigsqcup_{N \in \NNN_0} \cQ^{(N)} := \bigsqcup_{N \in \NNN_0} \NNN^N.
\label{eq:cQX}
\end{equation}
Here, $ \cQ^{(N)} = \NNN^N $ is also called the $ N $--particle sector of configuration space. The state of the system is described by a normed vector in the (mode--) \textbf{Fock space}
\begin{equation}
	\sF := L^2(\cQ).
\end{equation}
We impose bosonic symmetry using the symmetrization operator $ S_+: \sF \to \sF $ with
\begin{equation}
	(S_+ \Psi)(j_1, \ldots, j_N) := \frac{1}{N!} \sum_{\sigma \in S_N} \Psi(j_{\sigma(1)}, \ldots, j_{\sigma(N)}),
\label{eq:symmetricdef}
\end{equation}
where $ S_N $ is the permutation group. The bosonic Fock space is then given by
\begin{equation}
	\sF_+ := S_+ (\sF).
\label{eq:symmetricfockdef}
\end{equation}
Equivalently, using the symmetric tensor product
\begin{equation}
	\phi \otimes_S \phi := S_+ (\phi \otimes \phi),
\label{eq:otimesSA}
\end{equation}
we can write
\begin{equation}
	\sF_+ = \bigoplus_{N \in \NNN_0} (\ell^2)^{\otimes_S N}.
\label{sFtensorproduct}
\end{equation}
Denote by $ (\be_j)_{j \in \NNN} $ the canonical basis of $ \ell^2 $ and by $ a^\dagger_j, a_j $ the creation and annihilation operators corresponding to $ \be_j $, which are a priori just symbolic expressions. Products of these expressions or countable sums thereof are not necessarily defined as Fock space operators. Instead, we interpret them as elements of the $ ^* $--algebra $ \cAB $, defined as follows: Denote by
\begin{equation}
	\Pi_{\be} := \big\{ P_{\be} = a_{j_1}^{\sharp_1} \ldots a_{j_m}^{\sharp_m} \; \big\vert \; m \in \NNN_0, j_{\ell} \in \NNN, \sharp_\ell \in \{\cdot, \dagger\} \big\}
\end{equation}
the set of all finite operator products with respect to the basis $(\be_j)_{j \in \NNN} $. Then, $ \cAB $ is defined as the set of all countable sums
\begin{equation}
	\cAB := \left\{ H = \sum_{P_{\be} \in \Pi_{\be}} H_{j_1, \sharp_1, \ldots, j_m, \sharp_m} P_{\be} \; \middle\vert \; H_{j_1, \sharp_1, \ldots, j_m, \sharp_m} \in \CCC \right\},
\label{eq:cAB}
\end{equation}
with $ m $ depending on $ P_{\be} $. Let us denote the space of all complex--valued sequences by
\begin{equation}
	\cE = \{ \NNN \to \CCC \}.
\end{equation}
For $ \bphi \in \cE $, we then introduce the creation/annihilation operators
\begin{equation}
	a^\dagger(\bphi) := \sum_{j \in \NNN} \phi_j a^\dagger_j, \qquad
	a(\bphi) := \sum_{j \in \NNN} \overline{\phi_j} a_j, \qquad
	a^\dagger(\bphi), a(\bphi) \in \cAB,
\end{equation}
where the overline denotes complex conjugation. Alternatively, complex conjugation is described by the complex conjugation operator
\begin{equation}
	J: \cE \to \cE, \qquad (J \bphi)_j = \overline{\phi_j},
\label{eq:J}
\end{equation}
with $ J^2 = 1 $ and whose adjoint satisfies $ (J^*)^2 = 1 $. We will interchangeably use the notations $ \phi_j, (\bphi)_j $ and $ (\bphi)(j) $ for elements of a series $ \bphi \in \cE $. Further, for $ \bphi, \bpsi \in \ell^2 $ we impose the canonical commutation relations (CCR) on $ \cAB $
\begin{equation}
	[a(\bphi), a(\bpsi)] = [a^\dagger(\bphi), a^\dagger(\bpsi)] = 0, \qquad
	[a(\bphi), a^\dagger(\bpsi)] = \langle \bphi, \bpsi \rangle.
\label{eq:CCR}
\end{equation}
In case $ \bphi, \bpsi \in \ell^2 $, we may also densely define $ a^\sharp(\bphi) $ as Fock space operators
\begin{equation}
\begin{aligned}
	(a^\dagger(\bphi) \Psi)(j_1, \ldots, j_N) &= \frac{1}{\sqrt{N}} \sum_{\ell = 1}^N \phi_{j_\ell} \Psi(j_1, \ldots, j_{\ell - 1}, j_{\ell + 1}, \ldots, j_N)\\
	(a(\bphi) \Psi)(j_1, \ldots, j_N) &= \sqrt{(N+1)} \sum_{j \in \NNN} \overline{\phi_j} \Psi(j_1, \ldots, j_N, j),\\
\end{aligned}
\label{eq:aadaggermomentumspace}
\end{equation}
which preserve symmetry and satisfy \eqref{eq:CCR} as a strong operator identity on a dense domain in Fock space.\\
Creation and annihilation operators $ a^\sharp(\bphi) $ are particularly easy to handle, if the sum over $ j $ is finite, or equivalently, if the form factor is an element of the following space:
\begin{equation}
	\cD := \{ \bphi \in \ell^2 \; \mid \; \phi_j = 0 \; \text{for all but finitely many} \; j \in \NNN\}.
\label{eq:cD}
\end{equation}
Note that $ \cD \subset \ell^2 \subset \cE $ and $ \cE = \cD' $, i.e., $ \cE $ is the dual space of $ \cD $ with respect to a suitable seminorm--induced topology, see \cite[Chap.~1, Example~1.6]{BerezanskyKondratiev}.\\
By an \textbf{algebraic Bogoliubov} transformation, we mean a map $ \cV_\cA: \cAB \to \cAB $, that sends
\begin{equation}
\begin{aligned}
	a^\dagger(\bphi) &\mapsto b^\dagger(\bphi) := a^\dagger(u \bphi) + a(v \overline{\bphi}) , \qquad\\
	a(\bphi) &\mapsto b(\bphi) := a^\dagger(v \overline{\bphi}) + a(u \bphi)
\end{aligned}
\label{eq:bdaggerb}
\end{equation}
for all $ \bphi \in \cD $, where $ u, v: \cD \to \ell^2 $ satisfy the \textbf{bosonic Bogoliubov relations} as a weak operator identity:
\begin{equation}
\begin{aligned}
	u^*u - v^T \overline{v} &= 1 \qquad
	&u^*v - v^T \overline{u} = 0\\
	u u^* - v v^* &= 1 \qquad
	&u v^T - v u^T =0.\\
\end{aligned}
\label{eq:Bogoliuborelations}
\end{equation}
Here, $ u^* $ is the adjoint of $ u $ and the conjugate and transposed operators are given by $ \overline{u} = J u J $ and $ u^T = J u^* J $ (and the same for $ v $). Writing $ u, v $ as matrices of infinite size, $ u = (u_{jk})_{j,k \in \NNN}, v = (v_{jk})_{j,k \in \NNN} $, we equivalently have $ (u^T)_{jk} = (u)_{kj} $, $ (u^*)_{jk} = \overline{(u)_{kj}} $ and $ (\overline{u})_{jk} = \overline{(u)_{jk}} $. Further, we may keep track of $ \cV_\cA $ by a block matrix $ \cV = \big( \begin{smallmatrix} u & v \\ \overline{v} & \overline{u} \end{smallmatrix} \big) $. In \cite[Lemma~4.1]{BBS} it was proven that \eqref{eq:Bogoliuborelations} holds, whenever $ \cV $ and $ \cV^* $ both preserve the CCR, that is, \eqref{eq:CCR} also holds for $ a^\sharp $ replaced by $ b^\sharp $.\\

\section{Construction of the Fock space extension}
\label{sec:ESS}

We will establish implementability on a Fock space extension $ \EB \supset \sF $ here, which is related, but not equal to the extensions $ \FB, \FB_{\ex} $ in \cite{BBS} and \cite{ESS}. Just as in \cite{BBS}, we define the \textbf{generalized $ N $--particle space} and the \textbf{generalized Fock space} as
\begin{equation}
	\cE^{(N)} := \big\{ \Psi^{(N)}: \NNN^N \to \CCC \big\}, \qquad
	\cE_\sF := \big\{ \Psi: \cQ \to \CCC \big\} \supset \sF.
\label{eq:cEsF}
\end{equation}
The definition of $ \EB $ is motivated by formal and possibly divergent sums that appear when formally applying one or several operators $ a(\bphi), \bphi \in \cE $ as in \eqref{eq:aadaggermomentumspace} to functions in $ \cE_\sF $. In its most general form, such a formal sum on the $ (N) $--sector reads
\begin{equation}
	\Psi^{(N)}(j_1, \ldots, j_N) = \sum_{j_{N + 1}, \ldots, j_{N + L} \in \NNN} \Psi_{(L)}^{(N + L)}(j_1, \ldots, j_{N + L}),
\label{eq:Psisum}
\end{equation}
where $ N, L \in \NNN_0 $. There are two ways to read \eqref{eq:Psisum}: First, $ \Psi^{(N)}(j_1, \ldots, j_N) $ can be viewed as a generalized function value, which is given by a formal sum at fixed $ (j_1, \ldots, j_N) $. Second, \eqref{eq:Psisum} can be seen as a definition for a generalized function on $ \NNN^N $ that is obtained by taking the ``raw functions'' $ \Psi_{(L)}^{(N + L)} $ and formally ``integrating out'' the last $ L $ variables.\\

Mathematically, the sector $ \Psi^{(N)} $ is specified by a pair $ \big( \Psi_{(L)}^{(N + L)}, L \big) $. In order to allow for countable linear combinations and products of such expressions, we introduce the algebra of functions
\begin{equation}
	\EBn := \Big\{ \NNN_0 \times \NNN_0 \to \cE_\sF \; \Big\vert \; (N, L) \mapsto \Psi_{(L)}^{(N + L)} \in \cE^{(N + L)} \Big\}.
\label{eq:EBn}
\end{equation}
Here, $ \EBn $ is to be understood as a $ \CCC $--vector space with linear combinations defined argument--wise, i.e.,
\begin{equation}
	(c\Psi + \Psi')_{(L)}^{(N + L)} = c (\Psi)_{(L)}^{(N + L)} + (\Psi')_{(L)}^{(N + L)}, \quad c \in \CCC.
\label{eq:EBnsum}
\end{equation}
The algebra multiplication is given by
\begin{equation}
	(\Psi \otimes \Psi')_{(L)}^{(N + L)} = \sum_{\substack{N_1 + N_2 = N \\ L_1 + L_2 = L}} (\Psi)_{(L_1)}^{(N_1 + L_1)} \otimes (\Psi')_{(L_2)}^{(N_2 + L_2)},
\label{eq:EBnmultiplication}
\end{equation}
with $ \otimes $ denoting the topological tensor product in $ \cE_\sF $. We now introduce the sum notation, where we write each function in $ \EBn $ as a formal countable sum
\begin{equation}
	\Psi^{(N)}(j_1, \ldots, j_N) = \sum_{L \in \NNN_0} \; \sum_{j_{N + 1}, \ldots, j_{N + L} \in \NNN} \Psi_{(L)}^{(N + L)}(j_1, \ldots, j_{N + L}).
\label{eq:sumnotation}
\end{equation}
This notation is consistent with the notation in \eqref{eq:Psisum}, and \eqref{eq:EBnmultiplication} coincides with the formal multiplication of two sums of the kind \eqref{eq:sumnotation}. However, permuting sum indices or executing convergent sums in \eqref{eq:sumnotation} intuitively leaves the sum invariant, but results in a different associated element in $ \EBn $. Therefore, we mod out a (two--sided) ideal\footnote{By an ideal of an algebra, we mean that $ \cI $ shall be closed (in an algebraic sense) under the (commutative) algebra multiplication $ \otimes $ and linearity. So for $ \Psi_1, \Psi_2 \in \cI, \Psi \in \EBn $ and $ c \in \CCC $, we have $ (\Psi \otimes \Psi_1) \in \cI $ and $ c \Psi_1 + \Psi_2 \in \cI $. Intuitively, the elements in $ \cI $ are the ones equivalent to 0.} $ \cI \subset \EBn $, which is generated by requiring that:
\begin{enumerate}[(A)]
\item $ \Psi - \Psi' \in \cI $, whenever each $ (\Psi')_{(L)}^{(N + L)} $ can be obtained by permuting the last $ L $ indices of $ (\Psi)_{(L)}^{(N + L)} $. So we may swap indices that are integrated out.\\

\item $ \Psi - \Psi' \in \cI $, whenever we may choose for each pair $ (N, L) $ some integer $ 0 \le \Delta L_{N, L} \le L $, such that
\begin{equation}
	(\Psi')_{(L')}^{(N + L')}(j_1, \ldots, j_{N + L'}) = \sum_{L: L' + \Delta L_{N, L} = L} \; \sum_{j_{N + L' + 1}, \ldots, j_{N + L}} (\Psi)_{(L)}^{(N + L)}(j_1, \ldots, j_{N + L})
\label{eq:PsiPsiprime2}
\end{equation}
is an absolutely convergent sum for all $ N, L' \in \NNN_0 $ and all $ (j_1, \ldots, j_{N + L'}) \in \NNN^{L'} $. In other words, we may execute absolutely convergent sums, where $ \Delta L_{N, L} $ is the number of sums executed in $ (\Psi)_{(L)}^{(N + L)} $.\\

\end{enumerate}
The \textbf{extended state space} is then defined as the quotient algebra
\begin{equation}
	\EB := \EBn /_{\cI}.
\label{eq:EB}
\end{equation}
Its elements (which are cosets of $ \cI $) can thus be treated as if they were formal sums \eqref{eq:sumnotation} and we adopt the sum notation also for elements of $ \EB $.\\

\paragraph{Remarks.} 

\begin{enumerate}
\setcounter{enumi}{\theremarks}
\item \label{rem:Fockspaceextension} \textit{$ \EB $ extends $ \sF $}: In sum notation \eqref{eq:sumnotation}, each $ \Psi \in \cE_\sF \supset \sF $ corresponds to exactly one element of $ \EBn $, viz. the one sending $ (N, 0) \mapsto \Psi^{(N)} $ and $ (N, L) \mapsto 0 $ for $ L \ge 1 $.\\
Further, if $ \Psi \in \cE_\sF $ with $ \Psi \neq 0 $, then the associated element in $ \EBn $ is not in $ \cI $. This is because \eqref{eq:sumnotation} renders a complex number at each fixed $ (j_1, \ldots, j_N) $, which is nonzero for at least one $ (j_1, \ldots, j_N) $ and cannot be changed by executing/un--executing convergent sums or permuting sum indices. So each $ \Psi \in \cE_\sF $ can be identified with a distinct coset of $ \cI $, i.e., with an element of $ \EB $. So we may embed $ \cE_\sF \supset \sF $ into $ \EB $ and thus, $ \EB $ is a true Fock space extension.\\

\item \label{rem:tensorproduct} \textit{Tensor products on $ \EB $}: The $ \otimes $--product on $ \EBn $ \eqref{eq:EBnmultiplication} induces an $ \otimes $--product on the quotient algebra $ \EB $. With the tensor product $ \otimes $, one can also view $ \EBn $ as a ``double--graded algebra'' with degrees $ N, L \in \NNN_0 $, just as $ \cE_\sF $ together with the tensor product $ \otimes $ becomes a graded algebra (with degree $ N $). Also, $ \EB $ is a graded algebra with degree $ N $, but not ``double--graded'', since $ L $ is not unique.\\

\item \label{rem:BBScomparison_Ren1} \textit{Comparison with \cite{BBS}}: The construction of $ \EB $ deviates from the construction of the Fock space extensions $ \FB, \FB_{\ex} $ in \cite{BBS}. In particular, the latter construction renders some additional vector spaces $ \Ren_1, \Ren, \eRen $ as a byproduct. Let us quickly comment on how to find certain elements of these spaces in $ \EB $:\\

Elements $ \fr \in \Ren_1 $ are a rigorous implementation of formal sums
\begin{equation}
	\Psi_\fr := \sum_{j \in \NNN} \phi_j
\label{eq:fr} 
\end{equation}
for some fixed $ \bphi \in \cE $. Mathematically, $ \fr = [ \bphi ] \subset \cE $ is an equivalence class with
\begin{equation}
	\bphi' \in [ \bphi ] \quad \Leftrightarrow \quad \sum_j (\phi_j - \phi'_j) = 0,
\label{eq:phiphiequivalence}
\end{equation}
where the sum is required to be absolutely convergent. It is easy to see that \eqref{eq:fr} is a sum notation of the coset\footnote{Here, $ [\Psi] $ denotes the coset of $ \cI $ in $ \EB $, which contains $ \Psi \in \EBn $, and $ \delta $ is the Kronecker delta.} $ \Psi_\fr = [(N, L) \mapsto \delta_{N, 0} \delta_{L, 1} \bphi] \in \EB $, so the pair $ (0, 1) $ is mapped to $ (\Psi_\fr)^{(1)}_{(1)}(j) = \phi_j $ and all other $ (N, L) $ are mapped to 0. Further, for $ \Psi'_\fr := \sum_{j \in \NNN} \phi'_j \in \EB $, we have $ \Psi'_\fr = \Psi_\fr $ if and only if \eqref{eq:phiphiequivalence} holds, since
\begin{equation}
	\sum_{j \in \NNN} \phi_j = \sum_{j \in \NNN} \phi'_j + \sum_{j \in \NNN} (\phi_j - \phi'_j),
\end{equation}
by definition of addition in $ \EBn $.\\

Elements $ \fR \in \Ren $ are now linear combinations of products of the form $ \fr_1 \ldots \fr_P $. These products rigorously implement formal sums
\begin{equation}
	\Psi_\fR := \sum_{j_1 \ldots j_P} (\bphi_1)_{j_1} \ldots (\bphi_P)_{j_P},
\end{equation} 
with $ \bphi_p \in \cE $ being a representative function of $ \fr_p $. It is easy to see that $ \Psi_\fR \in \EB $ in the sum notation.\\

Finally, elements $ e^\fr \in \eRen $ can be seen as a rigorous implementation of
\begin{equation}
	e^{\Psi_\fr} = \sum_{N = 0}^\infty \frac{\Psi_\fr^N}{N!} = \sum_{N = 0}^\infty \sum_{j_1, \ldots, j_N} \frac{1}{N!} \phi_{j_1} \cdot \ldots \cdot \phi_{j_N}.
\label{eq:efr}
\end{equation}
Obviously, $ e^{\Psi_\fr} \in \EB $. However, generic elements of $ \fc \in \eRen $ are quotients of the form $ \fc = \frac{a_1}{a_2} $ with $ a_j $ being a linear combination of terms $ e^{\fr} $ with $ \fr \in \Ren_1 $. Such a quotient can generally not be written as a sum of the form \eqref{eq:Psisum}. So we cannot readily find $ \eRen $ in $ \EB $ and thus also not the $ \eRen $--vector spaces $ \FB $ and $ \FB_{\ex} $.\\

\end{enumerate}	
\setcounter{remarks}{\theenumi}

\section{Main Result}	
\label{sec:mainresult}

Consider any Bogoliubov transformation $ \cV = \left( \begin{smallmatrix} u & v \\ \overline{v} & \overline{u} \end{smallmatrix} \right) $ with $ u, v: \cD \to \ell^2 $ (recall: $ \cD $ in \eqref{eq:cD} is the set of all sequences with finite support), as well as the dense domain
\begin{equation}
	\cD_\sF := \mathrm{span} \big\{a^\dagger(\bphi_1) \ldots a^\dagger(\bphi_N) \Omega, \; N \in \NNN_0, \; \bphi_{\ell} \in \cD \big\},
\label{eq:cDsF}
\end{equation}
where $ \Omega \in \sF $, defined by $\Omega^{(0)} = 1 $ and $ \Omega^{(N)} = 0 $ for $ N \ge 1 $, is the vacuum vector. In analogy to \cite[Def.~5.1]{BBS}, we define extended implementability as follows.

\begin{definition}
	A linear injective operator $ \UUU_\cV: \cD_\sF \to \EB $ \textbf{implements} a Bogoliubov transformation $ \cV $ \textbf{in the extended sense} if and only if
\begin{equation}
	\UUU_\cV a^\dagger(\bphi) \UUU_\cV^{-1} \Psi = b^\dagger(\bphi) \Psi, \qquad \UUU_\cV a(\bphi) \UUU_\cV^{-1} \Psi = b(\bphi) \Psi \qquad \forall \; \bphi \in \cD, \; \Psi \in \UUU_\cV(\cD_\sF).
\label{eq:implementation}
\end{equation}
In that case, $ \UUU_\cV $ is called an \textbf{extended implementer} of $ \cV $ and $ \cV $ is called \textbf{implementable in the extended sense}.
\label{def:implementation}
\end{definition}

Our main result is the following.

\begin{theorem}[Bosonic Extended Implementation Works]
Any bosonic Bogoliubov transformation $ \cV = \left( \begin{smallmatrix} u & v \\ \overline{v} & \overline{u} \end{smallmatrix} \right) $ with $ v, u: \cD \to \ell^2 $ is implementable in the extended sense.
\label{thm:bosonicESS}
\end{theorem}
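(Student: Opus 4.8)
\emph{Overall strategy.} I will define $\UUU_\cV$ on the spanning set of $\cD_\sF$ by the only rule compatible with \eqref{eq:implementation}, namely
\begin{equation*}
  \UUU_\cV\, a^\dagger(\bphi_1)\cdots a^\dagger(\bphi_N)\Omega \;:=\; b^\dagger(\bphi_1)\cdots b^\dagger(\bphi_N)\,\Omega_\cV ,
\end{equation*}
where $b^\dagger(\bphi)$ is as in \eqref{eq:bdaggerb}, the operators $a^\sharp(\bpsi)$ with $\bpsi\in\ell^2$ are made to act on $\EB$ (see below; recall $u\bphi,v\overline{\bphi}\in\ell^2$ for $\bphi\in\cD$), and $\Omega_\cV\in\EB$ is an ``extended transformed vacuum'' still to be constructed. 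Everything then reduces to three claims: (i) $\Omega_\cV$ is a genuine \emph{nonzero} element of $\EB$; (ii) the operators $b^\sharp(\bphi)$, $\bphi\in\cD$, obey the CCR on the linear span $\cD_\cV$ of all $b^\dagger(\bphi_1)\cdots b^\dagger(\bphi_N)\Omega_\cV$; (iii) $b(\bphi)\Omega_\cV=0$ in $\EB$ for every $\bphi\in\cD$.

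\emph{Why (i)--(iii) suffice.} As $\cD_\sF$ is the symmetric algebra over $\cD$, the only linear relations among the vectors $a^\dagger(\bphi_1)\cdots a^\dagger(\bphi_N)\Omega$ arise from linearity in each $\bphi_\ell$ and from $[a^\dagger(\bphi),a^\dagger(\bpsi)]=0$; by (ii) the $b^\dagger(\bphi)$ satisfy the same relations, so $\UUU_\cV$ is well defined, linear and maps into $\cD_\cV$. For injectivity, suppose $\sum_\alpha c_\alpha\, b^\dagger(\be_\alpha)\Omega_\cV=0$ is a finite sum over multisets $\alpha$ of basis indices, with $b^\dagger(\be_\alpha):=b^\dagger(\be_{j_1})\cdots b^\dagger(\be_{j_N})$ for $\alpha=(j_1,\dots,j_N)$, and let $D$ be the largest length occurring. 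Applying $b(\be_{\alpha_0})$ for a multiset $\alpha_0$ with $|\alpha_0|=D$ and commuting the annihilators to the right by (ii) and (iii): the terms with $|\alpha|<D$ vanish (more annihilators than creators forces one of them onto $\Omega_\cV$), and one is left with $c_{\alpha_0}\,p_{\alpha_0}\,\Omega_\cV$, where $p_{\alpha_0}>0$ is the number of bijective matchings of $\alpha_0$ with itself; by (i) this forces $c_{\alpha_0}=0$ for every length-$D$ multiset, contradicting the choice of $D$ unless all $c_\alpha$ vanish. Finally, for $\Psi=\UUU_\cV\Phi\in\UUU_\cV[\cD_\sF]$ one has $\UUU_\cV a^\dagger(\bphi)\UUU_\cV^{-1}\Psi=\UUU_\cV a^\dagger(\bphi)\Phi=b^\dagger(\bphi)\Psi$ at once, while $\UUU_\cV a(\bphi)\UUU_\cV^{-1}\Psi=b(\bphi)\Psi$ follows by rewriting $a(\bphi)\Phi$ through \eqref{eq:CCR} and $a(\bphi)\Omega=0$ and invoking (ii)--(iii); this is \eqref{eq:implementation}.

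\emph{The transformed vacuum, claim (i).} By \eqref{eq:Bogoliuborelations} one has $u^*u=1+v^T\overline v\geq1$ and $uu^*=1+vv^*\geq1$, so $u$ extends to an injective operator whose inverse $\overline u^{\,-1}$ is bounded on $\ell^2$ with $\|\overline u^{\,-1}\|\leq1$. Set $\cO:=-\,v\,\overline u^{\,-1}$, the bounded operator already discussed in the introduction --- its boundedness is exactly the feature making the bosonic case accessible. The relation $uv^T=vu^T$ gives $\cO^T=\cO$, and $v^*v=u^T\overline u-1$ gives $\cO^*\cO=1-\overline{(uu^*)^{-1}}\leq1$, so $\cO$ is a symmetric contraction and each of its columns lies in $\ell^2$. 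Define the formal squeezed vacuum
\begin{equation*}
  \Omega_\cV \;:=\; \sum_{n\geq0}\frac{1}{n!}\Big(\tfrac12\sum_{j,k\in\NNN}\cO_{jk}\,a^\dagger_j a^\dagger_k\Big)^{\!n}\Omega ,
\end{equation*}
to be read sectorwise: at a fixed $(j_1,\dots,j_N)$ only the term $n=N/2$ contributes, so $\Omega_\cV$ is an honest element of $\cE_\sF\subset\EB$, vanishing on odd sectors and equal to a finite sum of products of entries of $\cO$ on even ones. Its $N=0$ sector equals $1$, hence $\Omega_\cV\neq0$, and by Remark \ref{rem:Fockspaceextension} it is a nonzero element of $\EB$.

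\emph{The $b^\sharp$-operators on $\EB$, claims (ii)--(iii), and the main obstacle.} For $\bpsi\in\ell^2$ extend $a^\dagger(\bpsi)$ to $\EB$ by the first line of \eqref{eq:aadaggermomentumspace} acting on the $N$-variables (a finite sum), and extend $a(\bpsi)$ by the second line: on a representative formal sum \eqref{eq:sumnotation} it either contracts $\overline{\bpsi}$ into an already summed-out index --- an \emph{absolutely convergent} operation, since every vector met in that index lies in $\ell^2$ --- or else introduces a new summed-out index, raising $L$ by one. The step I expect to be the main obstacle is the rigorous justification of this: showing that the resulting action is well defined on $\cD_\cV$ and \emph{descends to the quotient} $\EB=\EBn/_{\cI}$, i.e. is compatible with the moves (A) and (B). Concretely this means verifying that every iterated sum one produces --- built from entries of $\cO$ and from the $\ell^2$-vectors $u\bphi_\ell$ and $v\overline{\bphi_\ell}$ --- is absolutely convergent, so that move (B) may legitimately be applied, and that no divergence blocks an identity we use. (In fact all these sums converge, so that $\UUU_\cV[\cD_\sF]\subset\cE_\sF$; arguing inside $\EB$ simply lets the quotient supply the needed rearrangements rather than proving them by hand.) Granting this, the CCR \eqref{eq:CCR} for the $a^\sharp(\bpsi)$ hold on $\cD_\cV$, and the two remaining identities are short and \emph{forced} by \eqref{eq:Bogoliuborelations}: expanding $[b^\sharp(\bphi),b^\sharp(\bpsi)]$ via \eqref{eq:bdaggerb} and \eqref{eq:CCR} leaves only the scalars $\langle u\bphi,u\bpsi\rangle-\langle v\overline{\bpsi},v\overline{\bphi}\rangle$ and $\langle v\overline{\bphi},u\bpsi\rangle-\langle v\overline{\bpsi},u\bphi\rangle$ and their conjugates, which by the first two lines of \eqref{eq:Bogoliuborelations} equal $\langle\bphi,\bpsi\rangle$ and $0$; this is (ii). Finally, acting with $b(\bphi)=a^\dagger(v\overline{\bphi})+a(u\bphi)$ on $\Omega_\cV$, the annihilation part contracts $\overline{u\bphi}$ into one leg of each pairing and, by $\cO\,\overline u=-v$, produces exactly $-a^\dagger(v\overline{\bphi})\Omega_\cV$, whence $b(\bphi)\Omega_\cV=0$; this is (iii), and the proof is complete.
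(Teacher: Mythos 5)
Your overall architecture matches the paper's: extend $a^\sharp$ to $\EB$, build a squeezed vacuum $\Omega_\cV$ whose two--particle kernel is (up to the factor--$2$ convention) the paper's $K_\cO$, define $\UUU_\cV$ by \eqref{eq:transformbogoliubovstate}, and reduce everything to $b(\bphi)\Omega_\cV=0$ plus injectivity. Your injectivity argument, though, is genuinely different and is a nice simplification: from a vanishing combination $\sum_\alpha c_\alpha\, b^\dagger(\be_\alpha)\Omega_\cV=0$ you extract coefficients by applying a top--length product of $b$--annihilators and normal--ordering via the $b$--CCR on $\EB$ together with $b(\be_j)\Omega_\cV=0$ and $\Omega_\cV\neq 0$, leaving $c_{\alpha_0}p_{\alpha_0}\Omega_\cV=0$. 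The paper instead computes $b_j^\dagger\Omega_\cV=a^\dagger(\bpsi_j)\Omega_\cV$ with $\bpsi_j=(1-4\cO J\cO J)u\be_j$, uses the strict bound $\Vert\cO\Vert<1/2$ to get linear independence of the $\bpsi_j$, and then argues on the lowest occupied sector; your route avoids both the strict norm bound and that bookkeeping. The price is that you lean on exactly the two technical inputs you only ``grant'': well--definedness of the extended $a^\sharp$ on the quotient $\EB=\EBn/_\cI$ (the paper's Lemma \ref{lem:aadaggerwelldefined}) and the extended CCR for the $b$'s as operator identities (Lemmas \ref{lem:extendedCCR}--\ref{lem:extendedCCRb}); you identified the first correctly as the main thing to verify, but neither it nor the contraction computation behind $b(\bphi)\Omega_\cV=0$ (the paper's Lemma \ref{lem:bOmegacVbosonic}) is actually carried out.

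The one genuine gap is the definition $\cO:=-v\,\overline{u}^{\,-1}$. In this paper $u,v$ are only given as maps $\cD\to\ell^2$ and are in general unbounded ($u^*u=1+v^T\overline v$ can be any operator $\geq 1$). From $\Vert u\bphi\Vert\geq\Vert\bphi\Vert$ on $\cD$ you only get a bounded inverse on $\overline{u}[\cD]$; that the closure of $\overline u$ is surjective is not obviously implied, because the relation $uu^*-vv^*=1$ is assumed only weakly against vectors in $\cD$ and the associated quadratic form is not continuous, so it does not control the kernel of the adjoint; moreover $\overline u^{\,-1}$ maps outside $\cD$, where $v$ is not defined, so the composition $v\overline u^{\,-1}$ itself needs a (closure) argument. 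This is precisely what the paper's definition of $\cO$ via the polar decompositions \eqref{eq:polardecomp}--\eqref{eq:cOpolar} with respect to $P_{v^T\overline v}$ is designed to avoid: it produces a globally defined contraction with $2\cO Ju\bphi=-vJ\bphi$ on $\cD$ without ever inverting $u$. Your construction is repairable (define $-v\overline u^{\,-1}$ on $\overline u[\cD]$, use $\Vert v\bphi\Vert^2=\Vert\overline u\bphi\Vert^2-\Vert\bphi\Vert^2$ there, extend boundedly), but then the symmetry $\cO^T=\cO$ of the kernel --- which your cancellation in claim (iii) uses when the annihilator hits either leg of a pair --- must be re--established for the extension; note that the paper devotes \eqref{eq:cOJ}--\eqref{eq:KcO2term} to exactly this point.
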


\paragraph{Remarks.} 

\begin{enumerate}
\setcounter{enumi}{\theremarks}
\item \label{rem:regular_implementability} \textit{Regular implementability}: Definition \ref{def:implementation} generalizes the (regular) notion of implementability of $ \cV $ on Fock space. In the regular sense, $ \cV $ is implementable if and only if there exists some unitary $ \UUU_\cV: \sF \to \sF $ satisfying \eqref{eq:implementation}. By taking limits in $ \ell^2 $, the case $ \bphi \in \cD $ can then be generalized to $ \bphi \in \ell^2 $.\\
While regular implementability only holds for $ \tr(v^* v) = \tr(v^T \overline{v}) < \infty $ \cite{Shale1962, ShaleStinespring1965}, extended implementability can be achieved for any $ v^* v $, as long as $ v $ is defined on $ \cD $.\\

\item \label{rem:diagonalization} \textit{Diagonalization}: Extended implementers can be used to diagonalize quadratic Hamiltonians in some extended sense, as described in \cite[Sect.~6]{BBS}: Consider a diagonalizable quadratic Hamiltonian $ H \in \cAB $, that is,
\begin{equation}
	H = \frac{1}{2} \sum_{j,k \in \NNN} (2 h_{jk} a^\dagger_j a_k \mp k_{jk} a^\dagger_j a^\dagger_k + \overline{k_{jk}} a_j a_k).
\label{eq:Hquadratic}
\end{equation}
In certain cases \cite{Araki1968, NamNapiorkowskiSolovej2016}, there is an algebraic Bogoliubov transformation $ \cV_\cA $ and a normal ordering constant $ c $, such that $ (H + c) $ becomes diagonal in the $ b, b^\dagger $--operators, so $ \widetilde{H} = \cV_\cA^{-1} (H + c) $ is block--diagonal in the $ a, a^\dagger $--operators:
\begin{equation}
	(H + c) = \sum_{j,k \in \NNN} E_{jk} b^\dagger_j b_k \qquad \Rightarrow \qquad
	\widetilde{H} = \sum_{j,k \in \NNN} E_{jk} a^\dagger_j a_k = d \Gamma(E)
\end{equation}
and such that $ E $ is self--adjoint on some domain $ \dom(E) \subseteq \ell^2 $. In that case, $ d \Gamma(E) $ is also self--adjoint. Further, if $ E $ maps $ \cD \to \cD $, then
\begin{equation}
	\widetilde{H} = \UUU_\cV^{-1} (H + c) \UUU_\cV
\end{equation}
holds as a strong operator identity $ \cD_\sF \to \cD_\sF $. So $ H $ is diagonalized in the extended sense by $ \UUU_\cV $. Here $ c $ is a possibly divergent sum of the form \eqref{eq:fr}, so we can interpret $ c \in \EB $, and a multiplication by $ c $ maps $ \EB \to \EB $.\\

\end{enumerate}	
\setcounter{remarks}{\theenumi}

\section{Proof of the Main Result}
\label{sec:proofmainresult}

\subsection{General Construction of the Extended Implementer}	
\label{subsec:extimplementer}

We now construct the extended implementer $ \UUU_\cV: \cD_\sF \to \EB $ for a given Bogoliubov transformation $ \cV $. The construction employs creation and annihilation operators $ a^\sharp(\bphi) $, which we first need to extend to $ \EB $ in such a way that they satisfy the CCR. Consider an element $ [\Psi] \in \EB $ characterized by a coset element (``representative'') $ \Psi \in \EBn, \Psi: (N, L) \mapsto \Psi_{(L)}^{(N + L)} \in \cE^{(N + L)} $. In analogy to \eqref{eq:aadaggermomentumspace}, we then set
\begin{equation}
\begin{aligned}
	(a^\dagger(\bphi) \Psi)_{(L)}^{(N + L)}(q, q') &:= \sum_{k = 1}^N \frac{1}{\sqrt{N}} \phi_{j_k} \Psi_{(L)}^{(N + L - 1)}(q \setminus j_k, q')\\
	(a(\bphi) \Psi)_{(L + 1)}^{(N + L + 1)}(q, j, q'') &:= \sqrt{N+1} \; \overline{\phi_{j}} \Psi_{(L)}^{(N + 1 + L)}(q, j, q''), \qquad (a(\bphi) \Psi)_{(0)}^{(N)} := 0,
\end{aligned}
\label{eq:aadaggerESS}
\end{equation}
with $ q = (j_1, \ldots, j_N), q' = (j_{N + 1}, \ldots, j_{N + L}), j = j_{N + 1} $ and $ q'' = (j_{N + 2}, \ldots, j_{N + L + 1}) $.\\

\begin{lemma}[Operator Extensions]
For $ \bphi \in \cE $, the representative--wise definition \eqref{eq:aadaggerESS} renders well--defined linear operators
\begin{equation}
	a^\dagger(\bphi): \EB \to \EB, \qquad
	a(\bphi): \EB \to \EB.
\end{equation}
\label{lem:aadaggerwelldefined}
\end{lemma}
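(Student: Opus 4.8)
The statement asserts that the representative-wise formulas \eqref{eq:aadaggerESS} for $a^\dagger(\bphi)$ and $a(\bphi)$ descend to well-defined linear maps on the quotient $\EB = \EBn/_\cI$. Since linearity on $\EBn$ is immediate from the explicit formulas (each is a finite sum of scalar multiples of translates/restrictions of the raw functions, and scalar multiplication and addition in $\EBn$ are argument-wise), the entire content is: (i) for $\bphi \in \cE$ the formulas actually land in $\EBn$, and (ii) $a^\sharp(\bphi)[\cI] \subseteq \cI$, so that the induced map on cosets does not depend on the choice of representative. I would organize the proof around exactly these two points.

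For (i) I would just check that, given $\Psi \in \EBn$, the assignment $(N,L) \mapsto (a^\sharp(\bphi)\Psi)_{(L)}^{(N+L)}$ defined by \eqref{eq:aadaggerESS} is again a function $\NNN_0 \times \NNN_0 \to \cE_\sF$ with the right grading. For $a^\dagger(\bphi)$ the output at $(N,L)$ is built from $\Psi_{(L)}^{(N+L-1)}$ (with a convention that this is $0$ when $N=0$), and $q\setminus j_k$ has $N-1$ entries, so together with $q'$ (length $L$) we get a function on $\NNN^{N-1+L}\times\NNN$-type index set — i.e.\ an element of $\cE^{(N+L)}$; no convergence issue arises because the sum over $k$ is finite. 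For $a(\bphi)$ the output at $(N,L+1)$ is built from $\Psi_{(L)}^{(N+1+L)}$ by fixing one argument and multiplying by $\sqrt{N+1}\,\overline{\phi_j}$ — again pointwise, no sum, so no convergence issue; the key observation here (and the reason the construction tolerates $\bphi \in \cE$ rather than only $\bphi \in \ell^2$) is that $a(\bphi)$ raises the $L$-degree rather than summing over the new index, so the potentially divergent sum $\sum_j \overline{\phi_j}(\cdots)$ is merely recorded formally as an element of $\EBn$ rather than evaluated.

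Point (ii) is the substantive part, and it splits according to the two generators of $\cI$ described in (A) and (B) in Section~\ref{sec:ESS}. Because $\cI$ is generated (as a two-sided ideal, but $\otimes$ plays no role here since $a^\sharp(\bphi)$ is not an algebra homomorphism — rather one checks it maps the generating relations into $\cI$ and uses linearity), it suffices to show: if $\Psi - \Psi'$ is one of the elementary generators of type (A) or type (B), then $a^\sharp(\bphi)\Psi - a^\sharp(\bphi)\Psi'$ again lies in $\cI$. For type (A), $\Psi'$ is obtained from $\Psi$ by permuting the last $L$ indices of each $\Psi_{(L)}^{(N+L)}$; I would trace through \eqref{eq:aadaggerESS} and verify that $a^\dagger(\bphi)$ touches only the first $N$ arguments (it inserts/removes among $j_1,\dots,j_N$), hence commutes with permutations of the last $L$, so $a^\dagger(\bphi)\Psi'$ is again a last-$L$-index permutation of $a^\dagger(\bphi)\Psi$, i.e.\ their difference is a type-(A) generator; for $a(\bphi)$, which moves $j=j_{N+1}$ into the "integrated-out" block, one checks that a permutation of the last $L$ arguments of $\Psi$ at level $(N+1+L)$ translates into a permutation of the last $L+1$ arguments of $a(\bphi)\Psi$ at level $(N+1+L+1)$ — here I need to be slightly careful because the newly-created integrated index $j$ sits at the front of the integrated block $q''$, so I would argue that permuting the last $L$ of the input induces permuting a sub-block of the last $L+1$ of the output, which is still covered by (A) (possibly after composing with the observation that (A) allows arbitrary permutations of the whole integrated block). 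For type (B), where $\Psi'$ arises from $\Psi$ by executing absolutely convergent partial sums over some of the trailing indices, I would similarly verify that applying $a^\dagger(\bphi)$ commutes with executing those sums (the finite sum over $k$ and the scalar $\phi_{j_k}/\sqrt N$ pass through an absolutely convergent sum over disjoint indices), and that applying $a(\bphi)$ — which only prepends a new non-summed index $j$ and multiplies by $\overline{\phi_j}$ — likewise commutes with executing sums over the strictly-later indices $q''$, with absolute convergence preserved since we only multiply by the fixed scalar $\sqrt{N+1}\,\overline{\phi_j}$. In each case the conclusion is that $a^\sharp(\bphi)\Psi - a^\sharp(\bphi)\Psi'$ is again of the form (A) or (B), hence in $\cI$.

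The main obstacle I anticipate is the bookkeeping in point (ii) for $a(\bphi)$: it changes $L \mapsto L+1$ and inserts the new index $j$ at a specific position in the integrated block, so one must be careful that "permute the last $L$ indices" of the input and "permute the last $L+1$ indices" of the output match up correctly under (A), and that the index $\Delta L_{N,L}$ governing how many sums are executed in (B) shifts consistently when the $(N,L)$-bidegree of the input is relabelled to the $(N,L+1)$-bidegree of the output. Making the index shifts precise — rather than the analytic content, which is trivial since no new infinite sums are ever created or destroyed — is where the care is needed; I would handle it by writing out the definition of each generator, applying \eqref{eq:aadaggerESS} termwise, and exhibiting explicitly the permutation resp.\ the family $(\Delta L_{N,L})$ that realizes the difference as an element of $\cI$.
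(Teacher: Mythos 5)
Your proposal is correct and takes essentially the same route as the paper: it first notes that \eqref{eq:aadaggerESS} lands in $\EBn$, then checks that $a^\dagger(\bphi)$ and $a(\bphi)$ send the type-(A) and type-(B) generators of $\cI$ into $\cI$, since permutations of the integrated block and execution of absolutely convergent sums commute with the finite sum over $k$ (for $a^\dagger$) resp.\ with multiplication by the fixed scalar $\sqrt{N+1}\,\overline{\phi_j}$ and the $L \mapsto L+1$ shift (for $a$). The index-shift bookkeeping you flag for $a(\bphi)$ is exactly how the paper treats it (the permuted or summed indices of $q''$ remain within the integrated block of the output), so no further ideas are needed.
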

\begin{proof}
Consider the two expressions in \eqref{eq:aadaggerESS}. It is clear that the functions $ (N, L) \mapsto (a^\sharp(\bphi) \Psi)_{(L)}^{(N + L)} $ are elements of $ \EBn $. So it remains to show that for $ \Psi_I \in \cI $, we also have $ a^\sharp(\bphi) \Psi_I \in \cI $.\\
First, suppose that $ \Psi_I = \Psi - \Psi' $ where $ (\Psi')_{(L)}^{(N + L - 1)}(q \setminus j_k, q') $ is obtained from $ \Psi_{(L)}^{(N + L - 1)}(q \setminus j_k, q') $ by a permutation of the last $ L $ indices (i.e., those in $ q' $). This situation corresponds to case (A) above \eqref{eq:EB}. Then the same index permutation in $ q' $ transforms $ (a^\dagger(\bphi) \Psi)_{(L)}^{(N + L)}(q, q') $ into $ (a^\dagger(\bphi) \Psi')_{(L)}^{(N + L)}(q, q') $. Thus, $ a^\dagger(\bphi) \Psi_I \in \cI $. A similar argument shows that $ a(\bphi) \Psi_I \in \cI $ when $ (\Psi')_{(L)}^{(N + 1 + L)}(q, j, q'') $ is obtained from $ \Psi_{(L)}^{(N + 1 + L)}(q, j, q'') $ by permuting the last $ L $ indices (i.e., those in $ q'' $).\\
Concerning case (B), suppose there was a choice of $ 0 \le \Delta L_{N, L} \le L $ for each $ N, L \in \NNN_0 $, such that \eqref{eq:PsiPsiprime2} was true. Then,
\begin{equation}
\begin{aligned}
	&(a^\dagger(\bphi) \Psi')_{(L')}^{(N + L')}(q, j_{N + 1}, \ldots, j_{N + L'})\\
	\overset{\eqref{eq:aadaggerESS}}{=} &\sum_{k = 1}^N \frac{1}{\sqrt{N}} \phi_{j_k} \; (\Psi')_{(L')}^{(N + L' - 1)}(q \setminus j_k, j_{N + 1}, \ldots, j_{N + L'})\\
	\overset{\eqref{eq:PsiPsiprime2}}{=} &\sum_{k = 1}^N \frac{1}{\sqrt{N}} \phi_{j_k} \sum_{L: L' + \Delta L_{N, L} = L} \; \sum_{j_{N + L' + 1}, \ldots, j_{N + L}} \Psi_{(L)}^{(N + L - 1)}(q \setminus j_k, j_{N + 1}, \ldots, j_{N + L})\\
	\overset{\eqref{eq:aadaggerESS}}{=} &\sum_{L: L' + \Delta L_{N, L} = L} \; \sum_{j_{N + L' + 1}, \ldots, j_{N + L}} (a^\dagger(\bphi) \Psi)_{(L)}^{(N + L)}(j_1, \ldots, j_{N + L}) \in \CCC,
\end{aligned}
\end{equation}
so $ a^\dagger(\bphi) \Psi_I \in \cI $. A similar calculation with integrating out up to $ L $ indices of $ q'' $ shows that $ a(\bphi) \Psi_I \in \cI $.\\
Now, since $ \cI $ is generated by elements of type (A) and (B), we conclude that $ \Psi_I \in \cI $ always implies $ a^\dagger(\bphi) \Psi_I \in \cI $ and $ a(\bphi) \Psi_I \in \cI $, so the definitions of $ a^\dagger(\bphi) $ and $ a(\bphi) $ do not depend on the choice of the coset representative $ \Psi $.\\
\end{proof}

\begin{lemma}[Extended CCR]
	$ a^\dagger(\bphi), a(\bphi) $, as defined in Lemma \ref{lem:aadaggerwelldefined}, satisfy the extended CCR:
\begin{equation}
	[a(\bphi), a(\bpsi)] = [a^\dagger(\bphi), a^\dagger(\bpsi)] = 0, \qquad
	[a(\bphi), a^\dagger(\bpsi)] = \langle \bphi, \bpsi \rangle = \sum_j \overline{\phi_j} \psi_j \in \EB
\label{eq:extendedCCR}
\end{equation}
as a strong operator identity on $ \EB $ for $ \bphi, \bpsi \in \cE $.
\label{lem:extendedCCR}
\end{lemma}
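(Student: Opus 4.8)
The plan is to verify all three identities by a direct computation on a fixed coset representative $\Psi \in \EBn$, using the representative-wise formulas \eqref{eq:aadaggerESS}. Since Lemma \ref{lem:aadaggerwelldefined} already guarantees that each $a^\sharp(\bphi)$ descends to a well-defined operator on $\EB$, the commutator of two such operators is automatically a well-defined operator on $\EB$, so it suffices to compare the cosets of the representatives one computes; no further compatibility check with $\cI$ is needed beyond what Lemma \ref{lem:aadaggerwelldefined} provides. For $[a^\dagger(\bphi), a^\dagger(\bpsi)] = 0$ I would apply the first line of \eqref{eq:aadaggerESS} twice: each application symmetrizes over the position into which the form factor is inserted among the real-particle arguments, so after two applications one is left with a double sum over an unordered pair of insertion positions carrying the weights $\phi$ and $\psi$; relabeling which position receives $\phi$ and which receives $\psi$ shows the commutator vanishes already on the level of $\EBn$, exactly as in the ordinary bosonic computation. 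For $[a(\bphi), a(\bpsi)] = 0$ I would apply the second line of \eqref{eq:aadaggerESS} twice: each application converts the last remaining real-particle argument into a new formal-sum slot carrying the respective form factor, and the two orderings produce representatives that are related by transposing the two newly created formal-sum indices (here one uses, as throughout, that the elements in play are symmetric in their particle arguments, a property preserved by $a^\dagger(\bphi)$ and $a(\bphi)$). By relation (A) in the definition of $\cI$ such a transposition lies in $\cI$, so the two cosets coincide in $\EB$.

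The substantial identity is $[a(\bphi), a^\dagger(\bpsi)] = \langle \bphi, \bpsi \rangle$. Fixing a representative $\Psi$ on the $(N,L)$-sector, I would first compute $a^\dagger(\bpsi)\Psi$ on the $(N+1,L)$-sector as the symmetrized sum $\frac{1}{\sqrt{N+1}}\sum_{k=1}^{N+1} \psi_{j_k}\Psi_{(L)}^{(N+L)}(\ldots\widehat{j_k}\ldots)$, and then apply $a(\bphi)$, which turns the $(N+1)$-th real-particle argument into a new formal-sum slot weighted by $\overline{\phi_{j_{N+1}}}$ with prefactor $\sqrt{N+1}$. I would then split the sum over $k$: the single term $k = N+1$ is the one in which the freshly inserted $\bpsi$ occupies precisely the argument that $a(\bphi)$ immediately removes; the factors $\tfrac{1}{\sqrt{N+1}}$ and $\sqrt{N+1}$ cancel, and this term equals $\Psi$ (unchanged on its $N$ real and $L$ formal arguments) with one extra formal-sum slot $j$ carrying the weight $\overline{\phi_j}\psi_j$, which is exactly the $\EB$-element $\langle\bphi,\bpsi\rangle = \sum_j \overline{\phi_j}\psi_j$ of Remark \ref{rem:BBScomparison_Ren1} acting on $\Psi$ through the $\otimes$-product of Remark \ref{rem:tensorproduct}. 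The remaining $N$ terms ($k \le N$), again with matching $\sqrt{\cdot}$-prefactors, reassemble term by term into $a^\dagger(\bpsi)a(\bphi)\Psi$. Subtracting gives the claim; the boundary sectors (in particular $N = 0$, where $a(\bphi)$ annihilates and $a^\dagger(\bpsi)a(\bphi)\Psi = 0$) are checked directly and also yield $\langle\bphi,\bpsi\rangle\,\Psi$.

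The main obstacle I expect is precisely this mixed commutator, for two reasons. First, the bookkeeping: one must track carefully into which position the new formal-sum index is inserted and how the combinatorial $\sqrt{\cdot}$-prefactors from \eqref{eq:aadaggerESS} distribute, so that the non-contracted terms match $a^\dagger(\bpsi)a(\bphi)\Psi$ on the nose rather than only up to an overall constant. Second, and more conceptually, the right-hand side $\langle\bphi,\bpsi\rangle$ is in general not a scalar but the formal-sum element $\sum_j \overline{\phi_j}\psi_j \in \EB$, so the identity must be read as one between operators $\EB \to \EB$, namely multiplication by this $\EB$-scalar in the sense of the $\otimes$-algebra structure on $\EB$; one should check that the contraction term is correctly identified with $\langle\bphi,\bpsi\rangle \otimes \Psi$ in $\EB$ (and that, when $\bphi,\bpsi \in \ell^2$, relation (B) collapses this to the genuine number $\langle\bphi,\bpsi\rangle$, recovering the ordinary CCR \eqref{eq:CCR}).
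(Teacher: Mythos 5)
Your proposal is correct and follows essentially the paper's own route: the paper dispatches this lemma in one line (``direct momentum space calculation as in the case $\bphi, \bpsi \in \ell^2$''), and your computation is exactly that calculation written out, including the key point that the contraction term in $[a(\bphi), a^\dagger(\bpsi)]$ is the formal-sum element $\langle \bphi, \bpsi \rangle \in \EB$ acting by $\otimes$--multiplication, with the $\sqrt{N+1}$--prefactors cancelling. Your explicit appeal to relation (A) for $[a(\bphi), a(\bpsi)] = 0$, together with the remark that symmetry in the particle arguments (preserved by $a^\sharp$) is used there, is if anything slightly more careful than the paper's one--line reference.
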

\begin{proof}
This follows by a direct calculation using basis coefficients, as in the case $ \bphi, \bpsi \in \ell^2 $.\\
\end{proof}

\begin{lemma}[Bogoliubov Transformations Conserve Extended CCR]
Consider a Bogoliubov transformation $ \cV $ satisfying the Bogoliubov relations \eqref{eq:Bogoliuborelations}. Then the Bogoliubov--transformed operators $ b^\sharp $ in \eqref{eq:bdaggerb} still satisfy the extended CCR
\begin{equation}
	[b(\bphi), b(\bpsi)] = [b^\dagger(\bphi), b^\dagger(\bpsi)] = 0, \qquad
	[b(\bphi), b^\dagger(\bpsi)] = \langle \bphi, \bpsi \rangle
\label{eq:extendedCCRb}
\end{equation}
as a strong operator identity on $ \EB $ for $ \bphi, \bpsi \in \cD $.\\
\label{lem:extendedCCRb}
\end{lemma}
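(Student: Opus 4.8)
The plan is to verify the three bracket relations in \eqref{eq:extendedCCRb} by direct substitution of the definitions \eqref{eq:bdaggerb} and then reducing everything to the already-established extended CCR of Lemma \ref{lem:extendedCCR} together with the Bogoliubov relations \eqref{eq:Bogoliuborelations}. First I would expand $ b^\dagger(\bphi) = a^\dagger(u\bphi) + a(v\overline{\bphi}) $ and $ b(\bpsi) = a^\dagger(v\overline{\bpsi}) + a(u\bpsi) $ (and the analogous expressions with $ \bphi, \bpsi $ swapped), noting that since $ \bphi, \bpsi \in \cD $ and $ u, v: \cD \to \ell^2 $, all the arguments $ u\bphi, v\overline{\bphi}, u\bpsi, v\overline{\bpsi} $ lie in $ \ell^2 \subset \cE $, so Lemma \ref{lem:extendedCCR} applies to each pair. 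Then a bracket like $ [b(\bphi), b^\dagger(\bpsi)] $ becomes a sum of four brackets of $ a^\sharp $'s; the $ [a,a] $ and $ [a^\dagger,a^\dagger] $ terms vanish, leaving $ [a(u\bphi), a^\dagger(u\bpsi)] + [a^\dagger(v\overline{\bphi}), a(v\overline{\bpsi})] = \langle u\bphi, u\bpsi\rangle - \langle v\overline{\bpsi}, v\overline{\bphi}\rangle $.

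The key step is then to recognize the resulting scalar expressions in $ \EB $ as the correct right-hand sides using the Bogoliubov relations. For $ [b(\bphi), b^\dagger(\bpsi)] $ one rewrites $ \langle u\bphi, u\bpsi\rangle - \langle v\overline{\bpsi}, v\overline{\bphi}\rangle = \langle \bphi, u^*u\,\bpsi\rangle - \langle \bphi, v^T\overline{v}\,\bpsi\rangle = \langle \bphi, (u^*u - v^T\overline{v})\bpsi\rangle = \langle \bphi, \bpsi\rangle $ by the first relation in \eqref{eq:Bogoliuborelations}; here I would be careful with the conjugate/transpose bookkeeping, using $ \langle v\overline{\bpsi}, v\overline{\bphi}\rangle = \langle \overline{\bpsi}, v^*v\,\overline{\bphi}\rangle $ and $ \overline{\langle\overline{\bpsi}, v^*v\,\overline{\bphi}\rangle} = \langle \bphi, \overline{v^*v}\,\bpsi\rangle $ together with $ \overline{v^*v} = J v^* v J = (JvJ)^*(JvJ)$\ldots$ $ — more cleanly, just invoke the matrix-entry identities $ (u^T)_{jk} = u_{kj} $, etc., stated after \eqref{eq:Bogoliuborelations} and compute the $ \EB $-valued sum $ \sum_{j,k}\overline{\phi_j}\psi_k[\ldots] $ directly. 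Similarly, $ [b^\dagger(\bphi), b^\dagger(\bpsi)] $ reduces to $ \langle v\overline{\bphi}, u\bpsi\rangle - \langle u\bpsi, v\overline{\bphi}\rangle$\ldots$ $ — actually to a combination that equals the $ (j,k) $-sum of $ (u^*v - v^T\overline{u})_{jk} $-type terms, which vanishes by the off-diagonal relations in \eqref{eq:Bogoliuborelations}; and $ [b(\bphi), b(\bpsi)] $ reduces to the sum built from $ u v^T - v u^T = 0 $.

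The one genuine subtlety — and the main obstacle — is that the Bogoliubov relations \eqref{eq:Bogoliuborelations} are only assumed as \emph{weak} operator identities (equalities of matrix entries, or sesquilinear forms tested against $ \cD $), whereas Lemma \ref{lem:extendedCCRb} asserts a \emph{strong} operator identity on all of $ \EB $. So I must check that the scalar $ \langle \bphi, \bpsi\rangle $ appearing on the right of the extended CCR, when reconstructed from the four $ a^\sharp $-brackets, is assembled as a genuinely convergent double sum $ \sum_{j,k}\overline{\phi_j}\psi_k\,(\cdot)_{jk} $ — which it is, since $ \bphi,\bpsi\in\cD $ have only finitely many nonzero entries, so all sums over $ j,k $ are finite and the weak relations can be applied entry by entry with no convergence issue. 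With that observation the computation is purely algebraic: substitute, cancel the trivially vanishing brackets, collect the surviving scalar coefficients, and read off \eqref{eq:Bogoliuborelations}. I would present it as: (i) expand each bracket into four $ a^\sharp $-brackets via bilinearity of $ [\cdot,\cdot] $; (ii) apply Lemma \ref{lem:extendedCCR}; (iii) rewrite the surviving inner products as finite $ (j,k) $-sums of matrix entries of $ u, v $ and their conjugates/transposes; (iv) invoke the four identities \eqref{eq:Bogoliuborelations} to obtain $ \langle\bphi,\bpsi\rangle $, $ 0 $, $ 0 $ respectively.
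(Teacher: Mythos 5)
Your proposal is correct and follows essentially the same route as the paper: expand each $ b^\sharp $--bracket into four $ a^\sharp $--brackets, apply Lemma \ref{lem:extendedCCR} (with $ u\bphi, v\overline{\bphi} \in \ell^2 \subset \cE $ thanks to $ \bphi \in \cD $ and Lemma \ref{lem:aadaggerwelldefined} for well--definedness), and then use the Bogoliubov relations \eqref{eq:Bogoliuborelations} tested against $ \cD $--vectors, where your observation that the weak identities suffice because all $ (j,k) $--sums are finite is exactly how the paper's computation (done explicitly only for $ [b(\bphi), b^\dagger(\bpsi)] $) proceeds. One small correction: when you carry out step (iii) for the off--diagonal brackets you will find that \emph{both} $ [b^\dagger(\bphi), b^\dagger(\bpsi)] $ and $ [b(\bphi), b(\bpsi)] $ reduce to the symmetry of $ u^* v $, i.e.\ to $ u^*v - v^T\overline{u} = 0 $ (once directly, once after complex conjugation), not to $ u v^T - v u^T = 0 $, which is the relation needed for the inverse transformation rather than for these brackets.
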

\begin{proof}
	Lemma \ref{lem:extendedCCR} ensures that the CCR hold for $ a^\sharp(\bphi): \EB \to \EB $. Lemma \ref{lem:aadaggerwelldefined} renders well--definedness of $ b^\sharp(\bphi): \EB \to \EB $ for $ \bphi \in \cD $.  The recovery of the CCR for $ b^\sharp $ using the Bogoliubov relations \eqref{eq:Bogoliuborelations} works as in the case of $ a^\sharp, b^\sharp $ being Fock space operators. For instance, we have\footnote{It is easy to verify that the first two expressions are well--defined operators $ \EB \to \EB $. The other expressions are elements of $ \EB $, which can be interpreted as multiplication operators $ \EB \to \EB $, since the $ \otimes $--product of two $ \EB $--elements is again in $ \EB $, see Remark \ref{rem:tensorproduct}.}
\begin{equation}
\begin{aligned}
	&{[b(\bphi), b^\dagger(\bpsi)]}
	= [a(u \bphi) + a^\dagger(v \overline{\bphi}), a^\dagger(u \bpsi) + a(v \overline{\bpsi})]
	\overset{\text{Lemma \ref{lem:extendedCCR}}}{=} \langle u \bphi, u \bpsi \rangle - \langle v J \bpsi, v J \bphi \rangle\\
	=~&\langle \bphi, u^* u \bpsi \rangle - \langle J J v^* v J \bpsi, J \bphi \rangle
	= \langle \bphi, u^* u \bpsi \rangle - \langle \bphi, v^T \overline{v} \bpsi \rangle
	\overset{\eqref{eq:Bogoliuborelations}}{=} \langle \bphi, \bpsi \rangle.
\end{aligned}
\end{equation}
The other two identities in \eqref{eq:extendedCCRb} are obtained analogously.\\
\end{proof}

As in the case $ \tr(v^* v) < \infty $ ~\cite{Solovej2014}, the definition of our extended implementer is based on a vector $ \Omega_\cV \in \EB $, called Bogoliubov vacuum. We first give a definition of $ \UUU_\cV $ for a given $ \Omega_\cV $ and then construct $ \Omega_\cV $ further below. Recall the definition of $ \cD_\sF $ \eqref{eq:cDsF}.\\

\begin{definition}
Given a Bogoliubov--transformed vacuum state $ \Omega_\cV \in \EB $, we define the linear \textbf{extended Bogoliubov implementer} $ \UUU_\cV: \cD_\sF \to \EB $ by
\begin{equation}
	\UUU_\cV a^\dagger(\bphi_1) \ldots a^\dagger(\bphi_n) \Omega := b^\dagger(\bphi_1) \ldots b^\dagger(\bphi_n) \Omega_\cV,
\label{eq:transformbogoliubovstate}
\end{equation}
with $ \bphi_{\ell} \in \cD $ and $ b^\dagger(\bphi_j) = (a^\dagger(u \bphi_j) + a(v \overline{\bphi_j})) $.\\
\label{def:implementer}
\end{definition}
It follows from Lemma \ref{lem:aadaggerwelldefined} that the right--hand side of \eqref{eq:transformbogoliubovstate} is an element of $ \EB $, so $ \UUU_\cV $ is well--defined.\\

\subsection{Construction of the Bogoliubov vacuum}
\label{subsec:Bogoliubovvacuum}

The only remaining step for finishing the construction of $ \UUU_\cV $ is to provide a reasonable Bogoliubov vacuum $ \Omega_\cV \in \EB $, which we do in Definition \ref{def:bosonicbogoliubovvacuum}. By ``reasonable'', we mean that $ \Omega_\cV $ is annihilated by all $ b $--operators, which we prove in Lemma \ref{lem:bOmegacVbosonic}. This property will play an important role in the proof of our main theorem.\\

Let us quickly explain the heuristics for the choice of $ \Omega_\cV $. In case $ \tr(v^* v) < \infty $, the Bogoliubov vacuum is well--known \cite[(61)]{Solovej2014}: Up to normalization, it consists of an exponential over two--particle wave functions of the kind
\begin{equation}
	K^{(2)} = - \sum_{j \in \NNN} \frac{\nu_j}{2 \mu_j} \bg_j \otimes \bg_j,
\label{eq:bosonicpair}
\end{equation}
where $ j $ indexes a simultaneous eigenbasis $ (\bof_j)_{j \in \NNN} $ of both\footnote{We have $ v^T \overline{v} u^* v J = u^* u u^* v J - u^* v J = u^* v v^* v J = u^* v J v^T \overline{v} $, so $ [v^T \overline{v}, u^* v J] = 0 $. Further, for $ \tr(v^* v) < \infty $, both operators are Hilbert--Schmidt, which allows for a simultaneous eigenbasis.} $ v^T \overline{v} $ and $ u^* v J $ with eigenvalues $ \lambda_j $ (with respect to $ v^T \overline{v} $), and where $ \nu_j = \sqrt{\lambda_j} $ and $ \mu_j = \sqrt{1 + \lambda_j} $. The orthonormal basis $ (\bg_j)_{j \in \NNN} $ is then given by $ \bg_j = \lambda_j^{-1/2} v J \bof_j $. Using that $ u^* v J $ is a spectral multiplication by $ \nu_j \mu_j = \sqrt{\lambda_j (1 + \lambda_j)} $, it is now an easy task to verify that
\begin{equation}
	\langle u \bphi, \bg_j \rangle = \langle \bphi, u^* v J \lambda_j^{-1/2} \bof_j \rangle = \mu_j \langle \bphi, \bof_j \rangle \quad \Rightarrow \quad
	2 \frac{a(u \bphi)}{\sqrt{2}} K^{(2)} = - v J \bphi \qquad \forall \bphi \in \cD.
\end{equation}
So when comparing with \eqref{eq:aadaggermomentumspace}, we see that $ K^{(2)} $ is the integral kernel of an operator which turns $ 2 J u \bphi $ into $ - v J \bphi $.

When generalizing to the case with arbitrary spectrum $ \sigma := \sigma(v^* v) = \sigma(v^T \overline{v}) $, an eigenbasis $ (\bg_j)_{j \in \NNN} $ of $ v^T \overline{v} $ will generally no longer exist. However, we still have a projection--valued measure (PVM) $ P_{v^T \overline{v}} $ with
\begin{equation}
	v^T \overline{v} = \int_\sigma \lambda \; dP_{v^T \overline{v}}(\lambda).
\label{eq:dPvv}
\end{equation}
And we are still able to define an operator $ \cO $, which turns $ 2 J u \bphi $ into $ - v J \bphi $ and has an integral kernel $ K_\cO \in \cE^{(2)} $: Consider the polar decompositions with respect to $ P_{v^T \overline{v}} $,
\begin{equation}
	v J = A_v \sqrt{\lambda}, \qquad
	J u = A_u \sqrt{1 + \lambda},
\label{eq:polardecomp}
\end{equation}
following from $ v^T \overline{v} = J v^* v J = u^* u $, where functions of $ \lambda $ are to be understood as spectral multiplications and $ A_v, A_u : \ell^2 \to \ell^2 $ are anti--unitary operators. Then, $ \cO: \ell^2 \to \ell^2 $ is defined as
\begin{equation}
	\cO = - A_v \sqrt{\frac{\lambda}{4 (1 + \lambda)}} A_u^*,
\label{eq:cOpolar}
\end{equation}
which is clearly bounded by $ 1/2 $.

\begin{lemma}[Bosonic Pairs]
	For any $ \bphi \in \ell^2 $, we have
\begin{equation}
	2 \cO J u \bphi = - v J \bphi \qquad \text{and} \qquad \Vert \cO \bphi \Vert < \frac{1}{2} \Vert \bphi \Vert.
\label{eq:cO}
\end{equation}
Further, $ \cO $ allows for an integral kernel $ K_\cO \in \cE^{(2)} $, so
\begin{equation}
	(\cO \bphi)_j = \sum_{j'} K_\cO(j, j') \phi_{j'}.
\label{eq:KcO}
\end{equation}
\label{lem:bosonicpair}
\end{lemma}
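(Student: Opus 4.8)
The plan is to establish the two displayed claims in \eqref{eq:cO} and \eqref{eq:KcO} in turn, using only the polar decompositions \eqref{eq:polardecomp} and the definition \eqref{eq:cOpolar}. First I would verify the identity $ 2\cO J u \bphi = - vJ\bphi $ by direct substitution: applying $ Ju = A_u \sqrt{1+\lambda} $ gives $ 2\cO Ju\bphi = -2 A_v \sqrt{\tfrac{\lambda}{4(1+\lambda)}} A_u^* A_u \sqrt{1+\lambda}\,\bphi $, and since $ A_u $ is anti--unitary, $ A_u^* A_u = 1 $; the spectral multiplications $ \sqrt{\tfrac{\lambda}{4(1+\lambda)}} $ and $ \sqrt{1+\lambda} $ commute and multiply to $ \tfrac{1}{2}\sqrt{\lambda} $, leaving $ -A_v\sqrt{\lambda}\,\bphi = -vJ\bphi $ by \eqref{eq:polardecomp}. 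One subtlety: $ \sqrt{1+\lambda} $ is an unbounded spectral multiplication, so I should note that the manipulation makes sense because $ Ju $ is already assumed to map $ \cD $ (or $ \ell^2 $, as stated) into $ \ell^2 $, i.e.\ $ \bphi $ lies in the domain of $ \sqrt{1+\lambda} $ in the relevant sense; the product $ \sqrt{\tfrac{\lambda}{4(1+\lambda)}}\sqrt{1+\lambda} = \tfrac{1}{2}\sqrt{\lambda} $ is bounded, so no domain issue survives in the end. The norm bound $ \Vert \cO\bphi\Vert < \tfrac12\Vert\bphi\Vert $ follows from $ \Vert \cO \Vert \le \tfrac12 $ (already observed below \eqref{eq:cOpolar}, since $ \sup_{\lambda\ge 0}\sqrt{\tfrac{\lambda}{4(1+\lambda)}} = \tfrac12 $ is not attained) together with the anti--unitarity of $ A_v, A_u $; strictness holds because the supremum $ \tfrac12 $ is a limit as $ \lambda\to\infty $ and is never achieved on any vector of finite norm, so for $ \bphi\neq 0 $ the inequality is strict, and for $ \bphi = 0 $ it reads $ 0 < 0 $ — so I would either state the bound as $ \le $ with strictness for $ \bphi\neq 0 $, or simply flag this as the paper's convention.

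For the existence of an integral kernel $ K_\cO\in\cE^{(2)} $ satisfying \eqref{eq:KcO}, the point is that $ \cO $ is an everywhere--defined operator on $ \ell^2 $, hence in particular its matrix elements $ K_\cO(j,j') := \langle \be_j, \cO\be_{j'}\rangle $ are well--defined complex numbers, giving a function $ K_\cO:\NNN^2\to\CCC $, i.e.\ an element of $ \cE^{(2)} $ as defined in \eqref{eq:cEsF}. The formula \eqref{eq:KcO} then needs the expansion $ \cO\bphi = \sum_{j'}\phi_{j'}\,\cO\be_{j'} $ to converge and to be evaluated coordinate--wise; for $ \bphi\in\cD $ this is a finite sum and \eqref{eq:KcO} is immediate, while for general $ \bphi\in\ell^2 $ it follows by continuity of $ \cO $ and of the coordinate functionals. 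Since the lemma's generalized function $ \cE^{(2)} $ carries no convergence requirement, it suffices to exhibit the matrix $ (K_\cO(j,j'))_{j,j'} $; I do not need Hilbert--Schmidt or any trace--class property — this is precisely the advantage over the Shale--Stinespring setting, and I would emphasize it in a sentence.

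The main obstacle, such as it is, is bookkeeping around the unbounded spectral multiplication $ \sqrt{1+\lambda} $ and the anti--unitarity (rather than unitarity) of $ A_v, A_u $: one must be careful that $ A_u^* A_u = 1 $ and $ A_v^* A_v = 1 $ hold as stated (anti--unitaries satisfy $ \langle A\xi, A\eta\rangle = \overline{\langle \xi,\eta\rangle} $ and $ A^* A = 1 $), and that complex conjugations do not sneak in sign or conjugation errors when composing $ A_v $, a spectral (hence complex--linear) multiplication, and $ A_u^* $. I would therefore spell out that the composition $ A_v f(\lambda) A_u^* $ is complex--linear because the two anti--linear pieces compose to a linear one, which is what makes $ \cO $ a genuine (linear) operator with an ordinary integral kernel. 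Once these conventions are pinned down, both claims are one--line computations.
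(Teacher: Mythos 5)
Your proposal is correct and follows essentially the same route as the paper for the two analytic claims: the identity $ 2 \cO J u \bphi = - v J \bphi $ and the bound $ \Vert \cO \bphi \Vert < \tfrac12 \Vert \bphi \Vert $ are obtained, exactly as in the paper, by plugging the polar decompositions \eqref{eq:polardecomp} into \eqref{eq:cOpolar} and using $ \sqrt{\lambda/(4(1+\lambda))} < \tfrac12 $; your caveats about the $ \bphi = 0 $ edge case and about the domain of the unbounded factor $ \sqrt{1+\lambda} $ are fair (the paper glosses over both) but harmless, since the bound is only ever used on nonzero vectors and the unbounded factor cancels against the bounded multiplier. The one step where you genuinely diverge is the existence of the kernel: the paper restricts $ \cO $ to $ \cD $ and cites the Schwartz kernel theorem for operators $ \cD \to \cE $, whereas you build $ K_\cO $ directly as the matrix of $ \cO $, $ K_\cO(j,j') = \langle \be_j, \cO \be_{j'} \rangle \in \cE^{(2)} $, check \eqref{eq:KcO} on $ \cD $ by finiteness of the sum, and extend to all $ \bphi \in \ell^2 $ by boundedness of $ \cO $ and Cauchy--Schwarz applied row--wise. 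In the discrete $ \ell^2 $ setting these are really the same mechanism, but your version is more elementary and marginally stronger (it gives absolute convergence of \eqref{eq:KcO} for every $ \bphi \in \ell^2 $, not just a kernel for the restriction $ \cO|_{\cD} $), while the paper's citation buys brevity and consistency with its general $ \cD \to \cE $ framework. No gaps.
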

\begin{proof}
The equality in \eqref{eq:cO} is verified by plugging in the polar decompositions \eqref{eq:polardecomp} and \eqref{eq:cOpolar}. $ \Vert \cO \bphi \Vert < \frac{1}{2} \Vert \bphi \Vert $ is a direct consequence of $ \sqrt{\frac{\lambda}{4 (1 + \lambda)}} < \frac{1}{2} $. Restricting $ \cO $ to $ \cD \subset \ell^2 $, we get the operator $ \cO|_{\cD} : \cD \to \ell^2 \subset \cE $. By the Schwartz kernel theorem, every operator $ \cD \to \cE $ has an integral kernel, which implies the existence of $ K_\cO $.
\end{proof}

\begin{definition}[Bosonic Bogoliubov Vacuum]
We define $ \Omega_\cV \in \EB $ by
\begin{equation}
	(\Omega_\cV)_{(0)}^{(2m)} := \frac{\sqrt{(2m)!}}{m!} (K_\cO)^{\otimes_S m}, \qquad (\Omega_\cV)_{(0)}^{(2m + 1)} = 0
\label{eq:bosonicbogoliubovvacuum}
\end{equation}
for $ m \in \NNN_0 $ and $ (\Omega_\cV)_{(L)}^{(N + L)} = 0 $ for $ L \ge 1 $.
\label{def:bosonicbogoliubovvacuum}
\end{definition}
\noindent In what follows, only the sector $ L = 0 $ will be relevant, so we will drop the index $ (L) $. Further, $ \Omega_\cV \in \cE_\sF $ under the embedding $ \cE_\sF \to \EB $ described in Remark \ref{rem:Fockspaceextension}.\\

\begin{lemma}[$ b $ Annihilates Bosonic $ \Omega_\cV $]
For $ \bphi \in \cD $, we have
\begin{equation}
	b(\bphi) \Omega_\cV = 0.
\label{eq:bannihilationbosonic}
\end{equation}
\label{lem:bOmegacVbosonic}
\end{lemma}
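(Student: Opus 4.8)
The plan is to show $b(\bphi)\Omega_\cV = 0$ by a direct computation in momentum space, exploiting both the explicit formula \eqref{eq:bosonicbogoliubovvacuum} for $\Omega_\cV$ and the defining property $2\cO J u\bphi = -vJ\bphi$ of the operator $\cO$ from Lemma \ref{lem:bosonicpair}. Since $b(\bphi) = a(u\bphi) + a^\dagger(v\overline{\bphi})$, I would first compute each of the two terms acting on the sectors $(\Omega_\cV)^{(2m)}$ separately. By the definitions \eqref{eq:aadaggerESS} (with $L = 0$ throughout, as noted after Definition \ref{def:bosonicbogoliubovvacuum}), applying $a(u\bphi)$ lowers the particle number by one and applying $a^\dagger(v\overline{\bphi})$ raises it by one; hence on each fixed sector $(2m-1) \mapsto 2m$ (odd target being trivially zero, and even target) the two contributions must cancel. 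So I would fix $m$ and show that
\begin{equation*}
	\big(a(u\bphi)\Omega_\cV\big)^{(2m+1)} + \big(a^\dagger(v\overline{\bphi})\Omega_\cV\big)^{(2m+1)} = 0
\end{equation*}
for all $m \in \NNN_0$, noting both pieces land in the $(2m+1)$-sector with $L=0$.

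The key computational input is that $K_\cO$ is the integral kernel of $\cO$, so $\sum_{j'} K_\cO(j,j')(u\bphi)(j')$ relates to $(\cO u\bphi)(j)$, and the symmetry of $K_\cO$ together with $2\cO J u\bphi = -vJ\bphi$ turns the $a(u\bphi)$ term into something proportional to $vJ\bphi = v\overline{\bphi}$ contracted into one slot of $(K_\cO)^{\otimes_S (m-1)}$. Concretely, applying $a(u\bphi)$ to $(\Omega_\cV)^{(2m+2)} = \frac{\sqrt{(2m+2)!}}{(m+1)!}(K_\cO)^{\otimes_S(m+1)}$ produces $\sqrt{2m+2}$ times the overline of $u\bphi$ contracted against the last variable of the symmetrized product of $(m+1)$ copies of $K_\cO$; carrying out the contraction and using that $K_\cO$ is the kernel of $\cO$ yields a term proportional to $v\overline{\bphi} \otimes_S (K_\cO)^{\otimes_S m}$. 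On the other side, $a^\dagger(v\overline{\bphi})$ applied to $(\Omega_\cV)^{(2m)} = \frac{\sqrt{(2m)!}}{m!}(K_\cO)^{\otimes_S m}$ gives $\frac{1}{\sqrt{2m+1}}$ times $v\overline{\bphi}$ symmetrically inserted into $(K_\cO)^{\otimes_S m}$. I would then check that the combinatorial prefactors $\frac{\sqrt{(2m+2)!}}{(m+1)!}\sqrt{2m+2}$, the factor $\tfrac12$ from $2\cO = -vJ(\cdot)$, and $\frac{\sqrt{(2m)!}}{m!}\cdot\frac{1}{\sqrt{2m+1}}$ match up with opposite signs, so the two contributions cancel identically.

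The main obstacle I anticipate is bookkeeping the symmetrization and combinatorial factors correctly: one has to be careful that $(K_\cO)^{\otimes_S(m+1)}$, when one variable is contracted by a vector, produces $v\overline{\bphi} \otimes_S (K_\cO)^{\otimes_S m}$ with exactly the right multiplicity (there are $2m+2$ slots but symmetry reduces the distinct contractions), and that the square roots coming from the $\frac{1}{\sqrt N}$ and $\sqrt{N+1}$ normalizations in \eqref{eq:aadaggerESS} conspire with the $\frac{\sqrt{(2m)!}}{m!}$ prefactor in \eqref{eq:bosonicbogoliubovvacuum} to give cancellation rather than merely proportionality. This is precisely the same computation as in the trace-class case \cite{Solovej2014}, and the only genuinely new point is that $\cO$ is defined via the PVM $P_{v^T\overline{v}}$ and the polar decompositions \eqref{eq:polardecomp} rather than via an eigenbasis; but since we only use the identity $2\cO Ju\bphi = -vJ\bphi$ from Lemma \ref{lem:bosonicpair} and the existence of the kernel $K_\cO$, the argument goes through verbatim. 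I would also remark that the $L \geq 1$ sectors of $b(\bphi)\Omega_\cV$ vanish trivially because $\Omega_\cV$ is supported on $L = 0$ and, by \eqref{eq:aadaggerESS}, $a^\dagger(\bphi)$ preserves $L$ while $a(\bphi)$ only raises $N$ and $L$ together in a way that still leaves the image in the $L=0$ coset representative when starting from $L=0$ — so the entire identity reduces to the single-sector cancellation above.
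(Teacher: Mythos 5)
Your overall strategy is the same as the paper's: split $b(\bphi)\Omega_\cV = a(u\bphi)\Omega_\cV + a^\dagger(v\overline{\bphi})\Omega_\cV$, compute both terms sector--wise on the $(2m+1)$--sectors, and use $2\cO J u\bphi = -vJ\bphi$ together with the kernel $K_\cO$ to force cancellation. However, there is a genuine gap at the one point where you wave your hands: you invoke ``the symmetry of $K_\cO$'' and then claim that ``since we only use the identity $2\cO Ju\bphi = -vJ\bphi$ from Lemma \ref{lem:bosonicpair} and the existence of the kernel $K_\cO$, the argument goes through verbatim.'' That claim is false, and the symmetry is exactly the nontrivial new ingredient. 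When $a(u\bphi)$ is contracted into the symmetrized product $(K_\cO)^{\otimes_S(m+1)}$, the summed index lands either in the \emph{second} slot of a $K_\cO$ factor, producing $\sum_j (Ju\bphi)_j K_\cO(\cdot,j) = \cO Ju\bphi = -\tfrac12 vJ\bphi$, or in the \emph{first} slot, producing $\sum_j (Ju\bphi)_j K_\cO(j,\cdot) = J\cO^* u\bphi$. Lemma \ref{lem:bosonicpair} gives you no control over the latter term. In the trace--class case the kernel $-\sum_j \frac{\nu_j}{2\mu_j}\bg_j\otimes\bg_j$ is manifestly symmetric, but here $\cO = -A_v\sqrt{\lambda/(4(1+\lambda))}\,A_u^*$ is built from polar decompositions, and its kernel symmetry (equivalently $J\cO^* = \cO J$) is not automatic.

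In the paper this is where the off--diagonal Bogoliubov relation $u^*v - v^T\overline{u} = 0$ enters: combined with the polar decompositions \eqref{eq:polardecomp} and the auxiliary identity $J^*BJ^* = JBJ$ \eqref{eq:JBJ}, it yields $J\cO^* = \cO J$ \eqref{eq:cOJ}--\eqref{eq:KcO2term}, so that the first--slot contractions also give $-\tfrac12 vJ\bphi$ and the two halves recombine into $-\,(vJ\bphi)\otimes_S K_\cO^{\otimes_S m}$ with the correct prefactor to cancel the creation term. Your proposal never uses this Bogoliubov relation at all, so as written it cannot close; you would either have to prove $K_\cO(j,j') = K_\cO(j',j)$ from \eqref{eq:Bogoliuborelations} (which is essentially the computation \eqref{eq:cOJ}--\eqref{eq:KcO2term}) or restructure the argument to avoid splitting into the two slot types, which the symmetrization does not permit. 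The remaining issues in your write--up (the precise $\sqrt{N}$--normalizations and the counting of contractions into the symmetrized tensor power, and the justification that executing the convergent $j$--sum keeps you in the $L=0$ coset) are bookkeeping you correctly flag and could carry out, but the kernel symmetry is a missing idea, not bookkeeping.
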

\begin{proof}
Recall \eqref{eq:bdaggerb} and the definition of the antilinear conjugation operator $ J $, which imply
\begin{equation}
	b(\bphi) \Omega_\cV = a(u \bphi) \Omega_\cV + a^\dagger(v J \bphi) \Omega_\cV.
\end{equation}
The creation term has only contributions of odd sector numbers $ (N) = (2m+1) $ that evaluate to
\begin{equation}
	(a^\dagger(v J \bphi) \Omega_\cV)^{(2m+1)} = \sqrt{2m+1} (v J \bphi) \otimes_S (\Omega_\cV^{(2m)}) = \frac{\sqrt{(2m+1)!}}{m!} (v J \bphi) \otimes_S (K_\cO)^{\otimes_S m}.
\label{eq:adaggervJ}
\end{equation}
The annihilation term also has only contributions of odd $ (N) $, which can be evaluated in mode--configuration space. In the following, let $ \sigma $ and $ \tilde\sigma $ be permutations of $ \{1, \ldots, 2m+2\} $ and $ \{1, \ldots, 2m+1\} $, let $ j = j_{\sigma(2m+2)} $, and denote by $ \sum_\sigma, \sum_{\tilde\sigma} $ the sums over all $ (2m+2)! $ or $ (2m+1)! $ permutations. Then,
\begin{equation}
\begin{aligned}
	&(a(u \bphi) \Omega_\cV)^{(2m+1)}(j_1, \ldots, j_{2m+1})\\
	= & \sqrt{2m+2} \sum_j (\overline{u \bphi})_j \Omega_\cV^{(2m+2)}(j_1, \ldots, j_{2m+1}, j)\\
	= & \frac{\sqrt{2m+2}}{(2m+2)!} \frac{\sqrt{(2m+2)!}}{(m+1)!} \sum_\sigma \sum_j (\overline{u \bphi})_j \prod_{k = 1}^{m+1} K_\cO(j_{\sigma(2k-1)}, j_{\sigma(2k)})\\
	= & \frac{\sqrt{(2m+1)!}}{(m+1)!} \frac{1}{(2m+1)!} \left( (m+1) \sum_{\tilde\sigma} \sum_j (\overline{u \bphi})_j K_\cO(j, j_{\tilde\sigma(2m+1)}) \prod_{k = 1}^m K_\cO(j_{\tilde\sigma(2k-1)}, j_{\tilde\sigma(2k)}) \right.\\
	&~~~~~~~~~~~~~~~~~~~~~~~~~~~~~ \left. + (m+1) \sum_{\tilde\sigma} \sum_j (\overline{u \bphi})_j K_\cO(j_{\tilde\sigma(2m+1)}, j) \prod_{k = 1}^m K_\cO(j_{\tilde\sigma(2k-1)}, j_{\tilde\sigma(2k)}) \right)\\
	&\Rightarrow \quad (a(u \bphi) \Omega_\cV)^{(2m+1)} = \frac{\sqrt{(2m+1)!}}{m!} \left( \sum_j (J u \bphi)_j (K_\cO(j, \cdot) +  K_\cO(\cdot, j) ) \right) \otimes_S K_\cO^{\otimes_S m}.
\end{aligned}
\label{eq:KcOsum}
\end{equation}
In order to evaluate the term in brackets, we use that $ K_\cO $ is the integral kernel of $ \cO $, so
\begin{equation}
	\sum_j (J u \bphi)_j K_\cO(\cdot, j) = \cO J u \bphi
	\overset{\eqref{eq:cO}}{=} - \frac{1}{2} v J \bphi.
\label{eq:KcO1term}
\end{equation}
For evaluating the other term, we use that the integral kernel of $ \cO^*: \ell^2 \to \ell^2 $ is $ K_{\cO^*}(j, j') = \overline{K_\cO(j', j)} $, so
\begin{equation}
	\sum_j (J u \bphi)_j K_\cO(j, \cdot) = \sum_j \overline{ K_{\cO^*}(\cdot, j)} (J u \bphi)_j = J \cO^* J J u \bphi = J \cO^* u \bphi.
\end{equation}
The Bogoliubov relations \eqref{eq:Bogoliuborelations} and polar decompositions \eqref{eq:polardecomp} now imply
\begin{equation}
\begin{aligned}
	(u^* v)^* = J u^* v J \quad &\Leftrightarrow \quad
	J^* \sqrt{\lambda} A_v^* J A_u \sqrt{1 + \lambda} = J \sqrt{1 + \lambda} A_u^* J^* A_v \sqrt{\lambda}\\
	 \quad &\Leftrightarrow \quad J^* A_u \sqrt{\frac{1}{1 + \lambda}} J J^* \sqrt{\lambda} A_v^* = A_v \sqrt{\frac{\lambda}{1 + \lambda}} A_u^* J = - \cO J,
\end{aligned}
\label{eq:cOJ}
\end{equation}
with $ \sqrt{\frac{1}{1 + \lambda}} $ being a densely defined spectral multiplication operator. Now, if $ B $ is some densely defined operator on $ \ell^2 $ and $ B^* $ is its adjoint, then for $ J \bpsi \in \dom(B) $ and $ J^* \bphi \in \dom(B^*) $, we have
\begin{equation}
\begin{aligned}
	\langle \bphi, J B J \bpsi \rangle
	=~ &\langle J J B^* J^* \bphi, J \bpsi \rangle
	= \langle \bpsi, J B^* J^* \bphi \rangle
	= \langle J B J^* \bpsi, J J \bphi \rangle
	= \langle \bphi, J^* B J^* \bpsi \rangle\\
	\quad \Rightarrow \quad J^* B J^* =~ &J B J.
\label{eq:JBJ}
\end{aligned}
\end{equation}
So 
\begin{equation}
\begin{aligned}
	&J^* A_u \sqrt{\frac{1}{1 + \lambda}} J J^* \sqrt{\lambda} A_v^*
	= J A_u \sqrt{\frac{1}{1 + \lambda}} J J \sqrt{\lambda} A_v^*
	= J A_u \sqrt{\frac{\lambda}{1 + \lambda}} A_v^* = - J \cO^*\\
	&\overset{\eqref{eq:cOJ}}{\Rightarrow} \quad J \cO^* = \cO J
	\quad \Rightarrow \quad
	\sum_j (J u \bphi)_j K_\cO(j, \cdot) = - \frac{1}{2} v J \bphi.
\end{aligned}
\label{eq:KcO2term}
\end{equation}
Plugging \eqref{eq:KcO1term} and \eqref{eq:KcO2term} into \eqref{eq:KcOsum}, we obtain
\begin{equation}
	(a(u \bphi) \Omega_\cV)^{(2m+1)} = - \frac{\sqrt{(2m+1)!}}{m!} (v J \bphi) \otimes_S K_\cO^{\otimes_S m},
\label{eq:KcOannihilation}
\end{equation}
which exactly cancels \eqref{eq:adaggervJ} and establishes the lemma.
\end{proof}



\subsection{Conditions for Extended Implementability}
\label{subsec:conditionsextimpl}

Before proceeding to the final proof of Theorem \ref{thm:bosonicESS}, we first set up some simple conditions for when $ \UUU_\cV $ is an extended implementer. These conditions are analogous to the ones given in \cite[Lemma~5.2]{BBS}.\\

\begin{lemma}[Conditions for an Extended Implementer]
	Let $ \Omega_\cV \in \EB $ such that $ b_j \Omega_\cV = 0 \; \forall j \in \NNN $. Further, let $ \UUU_\cV $ as in Definition \ref{def:implementer} be injective (so $ \UUU_\cV^{-1} $ exists). Then, $ \UUU_\cV $ implements $ \cV $ in the extended sense.
\label{lem:implementationconditions}
\end{lemma}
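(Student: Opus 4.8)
The plan is to verify the two operator identities in \eqref{eq:implementation} on a spanning set of $\UUU_\cV[\cD_\sF]$ and then extend by linearity. Since $\UUU_\cV$ is assumed injective, every $\Psi \in \UUU_\cV[\cD_\sF]$ has the form $\Psi = \UUU_\cV\Phi$ with a unique $\Phi \in \cD_\sF$, so $\UUU_\cV^{-1}\Psi = \Phi$ is well defined; and because all operators appearing in \eqref{eq:implementation} are linear, it suffices to treat $\Phi = a^\dagger(\bphi_1)\cdots a^\dagger(\bphi_n)\Omega$ with $n \in \NNN_0$ and $\bphi_1,\ldots,\bphi_n \in \cD$, i.e.\ one of the spanning vectors of $\cD_\sF$ in \eqref{eq:cDsF}.

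The creation identity is then immediate from Definition \ref{def:implementer}: since $a^\dagger(\bphi)\Phi = a^\dagger(\bphi) a^\dagger(\bphi_1)\cdots a^\dagger(\bphi_n)\Omega$ is again a spanning vector of $\cD_\sF$, we get
\begin{equation}
\UUU_\cV a^\dagger(\bphi)\UUU_\cV^{-1}\Psi = \UUU_\cV a^\dagger(\bphi) a^\dagger(\bphi_1)\cdots a^\dagger(\bphi_n)\Omega = b^\dagger(\bphi) b^\dagger(\bphi_1)\cdots b^\dagger(\bphi_n)\Omega_\cV = b^\dagger(\bphi)\Psi .
\end{equation}

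For the annihilation identity the idea is to compute both sides by pulling the annihilation operator through the chain of creation operators. On the Fock-space side, since $\bphi,\bphi_k \in \cD \subset \ell^2$ the CCR \eqref{eq:CCR} hold as strong operator identities and $a(\bphi)\Omega = 0$, so commuting $a(\bphi)$ through the $a^\dagger(\bphi_k)$ expresses $a(\bphi)\Phi$ as a finite sum $\sum_{k=1}^n \langle\bphi,\bphi_k\rangle\,a^\dagger(\bphi_1)\cdots a^\dagger(\bphi_n)\Omega$ with the $k$-th factor deleted, again an element of $\cD_\sF$; applying the linear map $\UUU_\cV$ and Definition \ref{def:implementer} termwise turns this into the same sum with every $a^\dagger$ replaced by $b^\dagger$ and $\Omega$ by $\Omega_\cV$. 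On the other side, $b(\bphi)\Psi = b(\bphi) b^\dagger(\bphi_1)\cdots b^\dagger(\bphi_n)\Omega_\cV$, and here one invokes the extended CCR of Lemma \ref{lem:extendedCCRb} (valid on $\EB$ for $\bphi,\bphi_k \in \cD$) to move $b(\bphi)$ past the $b^\dagger(\bphi_k)$: this produces exactly the same sum of commutator terms plus a remainder $b^\dagger(\bphi_1)\cdots b^\dagger(\bphi_n)\,b(\bphi)\Omega_\cV$. Since $\bphi \in \cD$ has only finitely many nonzero components and $\bphi \mapsto b(\bphi)$ is antilinear by \eqref{eq:bdaggerb}, one has $b(\bphi)\Omega_\cV = \sum_j \overline{\phi_j}\,b_j\Omega_\cV = 0$ by hypothesis, so the remainder drops out and the two sides agree; with the assumed injectivity this is precisely the statement that $\UUU_\cV$ implements $\cV$ in the extended sense of Definition \ref{def:implementation}.

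The main obstacle I expect is purely in making the second computation rigorous inside $\EB$ rather than inside Fock space: one must be careful that Lemma \ref{lem:extendedCCRb} genuinely delivers the ordered rearrangement of $b(\bphi) b^\dagger(\bphi_1)\cdots b^\dagger(\bphi_n)\Omega_\cV$ as operators $\EB \to \EB$, that the terminal term is literally $b^\dagger(\bphi_1)\cdots b^\dagger(\bphi_n)\,b(\bphi)\Omega_\cV$ so that the hypothesis $b_j\Omega_\cV = 0$ (equivalently $b(\bphi)\Omega_\cV=0$) can be applied, and that the commutator scalars $\langle\bphi,\bphi_k\rangle \in \EB$ match on the nose those coming from the Fock-space expansion; everything else is routine bookkeeping with creation and annihilation operators.
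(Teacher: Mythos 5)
Your proposal is correct and follows essentially the same route as the paper: the creation identity is read off directly from Definition \ref{def:implementer}, and the annihilation identity is obtained by commuting $a(\bphi)$ through the creation operators on the Fock-space side and $b(\bphi)$ through the $b^\dagger$'s on the $\EB$ side via the extended CCR, with the remainder killed by $b(\bphi)\Omega_\cV=0$. The only cosmetic difference is that you work with general form factors $\bphi_k\in\cD$ while the paper reduces to basis vectors $\be_j$, which is equivalent by (anti)linearity.
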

\begin{proof}
In order to establish the extended implementation \eqref{eq:implementation}, it suffices to show
\begin{equation}
	\UUU_\cV a^\sharp_j \UUU_\cV^{-1} \Psi = b^\sharp_j \Psi,
\end{equation}
for $ a^\sharp \in \{a^\dagger, a\} $ and
\begin{equation}
	\Psi = \UUU_\cV a^\dagger_{j_1} \ldots a^\dagger_{j_N} \Omega
	\overset{\eqref{eq:transformbogoliubovstate}}{=} b_{j_1}^\dagger \ldots b_{j_N}^\dagger \Omega_\cV,
\end{equation}
where $ N \in \NNN_0 $ and $ j, j_1, \ldots, j_N \in \NNN $ are arbitrary. In case $ a^\sharp_j = a^\dagger_j $, we directly compute
\begin{equation}
	\UUU_\cV a^\dagger_j \UUU_\cV^{-1} \Psi
	= \UUU_\cV a^\dagger_j a^\dagger_{j_1} \ldots a^\dagger_{j_N} \Omega
	\overset{\eqref{eq:transformbogoliubovstate}}{=} b^\dagger_j b^\dagger_{j_1} \ldots b^\dagger_{j_N} \Omega_\cV
	= b^\dagger_j \Psi.
\end{equation}
In case $ a^\sharp_j = a_j $, we obtain
\begin{equation}
\begin{aligned}
	\UUU_\cV a_j \UUU_\cV^{-1} \Psi	= &\UUU_\cV a_j a^\dagger_{j_1} \ldots a^\dagger_{j_N} \Omega\\
	= &\UUU_\cV \sum_{k = 1}^N a^\dagger_{j_1} \ldots a^\dagger_{j_{k-1}} [a_j, a^\dagger_{j_k}] a^\dagger_{j_{k+1}} \ldots a^\dagger_{j_N} \Omega + \UUU_\cV  a^\dagger_{j_1} \ldots a^\dagger_{j_N} \underbrace{a_j \Omega}_{=0} \\
	= &\sum_{k = 1}^N \delta_{j j_k} \UUU_\cV a^\dagger_{j_1} \ldots a^\dagger_{j_{k-1}} a^\dagger_{j_{k+1}} \ldots a^\dagger_{j_N} \Omega\\
	\overset{\eqref{eq:transformbogoliubovstate}}{=} &\sum_{k = 1}^N \delta_{j j_k} b^\dagger_{j_1} \ldots b^\dagger_{j_{k-1}} b^\dagger_{j_{k+1}} \ldots b^\dagger_{j_N} \Omega_\cV.
\end{aligned}
\label{eq:transformationCCR1}
\end{equation}
On the other hand, conservation of the CCR implies that $ [b_j, b_{j'}^\dagger] = \delta_{jj'} $, so
\begin{equation}
\begin{aligned}
	b_j \Psi	
	\overset{\eqref{eq:transformbogoliubovstate}}{=} &b_j b^\dagger_{j_1} \ldots b^\dagger_{j_N} \Omega_\cV\\
	= &\sum_{k = 1}^N b^\dagger_{j_1} \ldots b^\dagger_{j_{k-1}} [b_j, b^\dagger_{j_k}] b^\dagger_{j_{k+1}} \ldots b^\dagger_{j_N} \Omega_\cV +  b^\dagger_{j_1} \ldots b^\dagger_{j_N} \underbrace{b_j \Omega_\cV}_{=0}\\
	= &\sum_{k = 1}^N \delta_{j j_k} b^\dagger_{j_1} \ldots b^\dagger_{j_{k-1}} b^\dagger_{j_{k+1}} \ldots b^\dagger_{j_N} \Omega_\cV.
\end{aligned}
\label{eq:transformationCCR2}
\end{equation}
Expressions \eqref{eq:transformationCCR1} and \eqref{eq:transformationCCR2} agree, which renders the desired equality.\\
\end{proof}

\subsection{Proof of Theorem \ref{thm:bosonicESS}}

\begin{proof}[Proof of Theorem \ref{thm:bosonicESS}]
We have to show that $ \cV $ has an implementer $ \UUU_\cV $, i.e., \eqref{eq:implementation} holds. The operator $ \UUU_\cV $ and the vector $ \Omega_\cV $ are given in Definitions \ref{def:implementer} and \ref{def:bosonicbogoliubovvacuum}, respectively. By Lemma \ref{lem:implementationconditions}, $ \UUU_\cV $ implements $ \cV $ if $ b_j \Omega_\cV = 0 \; \forall j \in \NNN $ and $ \UUU_\cV $ is injective. The property $ b_j \Omega_\cV = 0 $ readily follows from Lemma \ref{lem:bOmegacVbosonic}. So it remains to establish injectivity of $ \UUU_\cV: \cD_\sF \to \EB $ in order to finish the proof.\\

Since $ \cD_\sF $ is spanned by vectors of the type $ a_{j_1}^\dagger \ldots a_{j_N}^\dagger \Omega $, it suffices to show that the set
\begin{equation}
	\cB := \big\{ b_{j_1}^\dagger \ldots b_{j_N}^\dagger \Omega_\cV \; \big\vert \; N \in \NNN_0, j_k \in \NNN \big\} \subset \EB
\label{eq:bdaggerset}
\end{equation}
is linearly independent. To do so, we investigate the set $ \cB $ step by step.\\
Let us start by evaluating
\begin{equation}
	b_j^\dagger \Omega_\cV = (a^\dagger(u \be_j) + a(v J \be_j)) \Omega_\cV.
\end{equation}
For even sectors, we have $ (b_j^\dagger \Omega_\cV)^{(2m)} = 0 $. For evaluating the odd sectors, we use that for $ \bphi \in \ell^2 $, the same arguments as in \eqref{eq:KcOsum} through \eqref{eq:KcOannihilation} yield $ (a(\bphi) \Omega_\cV)^{(2m+1)} = \frac{\sqrt{(2m+1)!}}{m!} (2 \cO J \bphi) \otimes_S K_\cO^{\otimes_S m} $. Thus, together with the definition of $ \Omega_\cV $ \eqref{eq:bosonicbogoliubovvacuum}, we get
\begin{equation}
	(b_j^\dagger \Omega_\cV)^{(2m+1)}
	= \frac{\sqrt{(2m+1)!}}{m!} \underbrace{(u \be_j + 2 \cO J v J \be_j)}_{=: \bpsi_j} \otimes_S K_\cO^{\otimes_S m}
	= (a^\dagger(\bpsi_j) \Omega_\cV )^{(2m+1)}.
\label{eq:bdaggerOmegacV}
\end{equation}
Now, $ v J = -2 \cO J u $ implies
\begin{equation}
	\bpsi_j = u \be_j + 2 \cO J v J \be_j = (1 - 4 \cO J \cO J) u \be_j.
\label{eq:bpsi}
\end{equation}
By Lemma \ref{lem:bosonicpair}, we have $ \Vert \cO \bphi \Vert < \frac{1}{2} \Vert \bphi \Vert $, so $ \Vert 4 \cO J \cO J \bphi \Vert < \Vert \bphi \Vert $. Thus, the operator $ (1 - 4 \cO J \cO J) $ is injective, as is $ u $ (since $ u^* u \ge 1 $), and linear independence of $ (\be_j)_{j \in \NNN} $ implies linear independence of $ (\bpsi_j)_{j \in \NNN} $.\\
Now, let us turn to the evaluation of a general $ b_{j_1}^\dagger \ldots b_{j_N}^\dagger \Omega_\cV $. For $ N = 2 $,
\begin{equation}
\begin{aligned}
	b_{j_1}^\dagger b_{j_2}^\dagger \Omega_\cV
	\overset{\eqref{eq:bdaggerOmegacV}}{=} b_{j_1}^\dagger a^\dagger(\bpsi_{j_2}) \Omega_\cV
	=~ &a^\dagger(\bpsi_{j_2}) b_{j_1}^\dagger \Omega_\cV + [a^\dagger(u \be_{j_1}) + a(v J \be_{j_1}), a^\dagger(\bpsi_{j_2})] \Omega_\cV\\
	=~ &a^\dagger(\bpsi_{j_1}) a^\dagger(\bpsi_{j_2}) \Omega_\cV + \langle v J \be_{j_1}, \bpsi_{j_2} \rangle \Omega_\cV.
\end{aligned}
\end{equation}
By a similar expansion in terms of commutators, one easily sees that
\begin{equation}
	b_{j_1}^\dagger \ldots b_{j_N}^\dagger \Omega_\cV
	= a^\dagger(\bpsi_{j_1}) \ldots a^\dagger(\bpsi_{j_N}) \Omega_\cV + \mathrm{l.o.t.},
\end{equation}
where $ \mathrm{l.o.t} $ (``lower--order terms'') is a sum of expressions $ a^\dagger(\bpsi_{j_1'}) \ldots a^\dagger(\bpsi_{j'_{N'}}) \Omega_\cV $ with $ N < N' $ and $ \{j'_1, \ldots, j'_{N'} \} \subset \{j_1, \ldots, j_N\} $.\\

Now, linear independence of $ \cB $ will follow if we can prove linear independence of the set of ``leading--order terms''
\begin{equation}
	\sA := \big\{ a^\dagger(\bpsi_{j_1}) \ldots a^\dagger(\bpsi_{j_N}) \Omega_\cV \; \big\vert \; N \in \NNN_0, j_k \in \NNN \big\} \subset \cE_\sF.
\label{eq:adaggerset}
\end{equation}
To see this implication of linear independence, suppose, $ \cB $ would be linearly dependent, so there was a linear combination
\begin{equation}
	B = \sum_{k = 1}^K \lambda_k b_{j_{k,1}}^\dagger \ldots b_{j_{k,N_k}}^\dagger \Omega_\cV = 0,
\end{equation}
with $ \lambda_k \neq 0 $. By $ \overline{N} = \max_k N_k $ we denote the highest number of consecutively applied creation operators. Then, $ B $ amounts to a finite linear combination of elements of the kind $ a^\dagger(\bpsi_{j_1}) \ldots a^\dagger(\bpsi_{j_N}) \Omega_\cV \in \sA $, where the contribution of terms with $ N = \overline{N} $ is
\begin{equation}
	\sum_{k: N_k = \overline{N}} \lambda_k a^\dagger(\bpsi_{j_{k,1}}) \ldots a^\dagger(\bpsi_{j_{k,N_k}}) \Omega_\cV,
\end{equation}
with the sum being nonempty. As $ B = 0 $, linear independence of $ \sA $ would now imply that $ \lambda_k = 0 $ whenever $ N_k = \overline{N} $, which contradicts our premise $ \lambda_k \neq 0 $. So linear independence of $ \sA $ implies linear independence of $ \cB $.\\

Finally, we establish linear independence of $ \sA $ by a contradiction. Suppose there was a linear combination
\begin{equation}
	0 = \sum_{k = 1}^K \lambda_k a^\dagger(\bpsi_{j_{k,1}}) \ldots a^\dagger(\bpsi_{j_{k,N_k}}) \Omega_\cV,
\end{equation}
with $ \lambda_k \neq 0 $ and $ \underline{N} = \min_k N_k $ being the least number of consecutively applied creation operators. Then,
\begin{equation}
	(a^\dagger(\bpsi_{j_1}) \ldots a^\dagger(\bpsi_{j_N}) \Omega_\cV)^{(n)} = 0 \qquad \text{for } n < N,
\end{equation}
i.e., the $ N $ lowest sectors are unoccupied. Thus, the $ (\underline{N}) $--sector of our linear combination amounts to
\begin{equation}
	0 = \sum_{k: N_k = \underline{N}} \lambda_k (a^\dagger(\bpsi_{j_{k,1}}) \ldots a^\dagger(\bpsi_{j_{k,\underline{N}}}) \Omega_\cV)^{(\underline{N})}
	= \sqrt{\underline{N}!} \sum_{k: N_k = \underline{N}} \lambda_k \bpsi_{j_{k,1}} \otimes_S \ldots \otimes_S \bpsi_{j_{k,\underline{N}}}.
\end{equation}
Since $ \{ \bpsi_j \; \mid \; j \in \NNN \} $ is linearly independent in $ \ell^2 $, also the set
\begin{equation}
	\big\{ \bpsi_{j_{k,1}} \otimes_S \ldots \otimes_S \bpsi_{j_{k,\underline{N}}} \; \big\vert \; j_{k, \ell} \in \NNN \big\} \subset (\ell^2)^{\otimes \underline{N}}
\end{equation}
is linearly independent, which implies $ \lambda_k = 0 $ for all $ k $ with $ N_k = \underline{N} $, establishes the desired contradiction and finishes the proof.\\
\end{proof}

\bigskip


\noindent\textit{Acknowledgments.}
This paper originated from discussions with Andreas Deuchert at the INdAM Quantum Meetings 2022, which were supported by the Istituto Nazionale di Alta Matematica "F. Severi". The author was further financially supported by the Basque Government through the BERC 2018-2021 program, by the Ministry of Science, Innovation and Universities: BCAM Severo Ochoa accreditation SEV-2017-0718, as well as by the European Research Council (ERC) through the Starting Grant \textsc{FermiMath}, Grant
Agreement No. 101040991.\\

\end{document}